\theoremstyle{definition}
\newtheorem{theorem}{Theorem}
\newtheorem{corollary}[theorem]{Corollary}
\newtheorem{proposition}[theorem]{Proposition}
\newtheorem{lemma}[theorem]{Lemma}
\newtheorem{definition}[theorem]{Definition}
\newtheorem{example}[theorem]{Example}
\newtheorem{notation}[theorem]{Notation}
\newtheorem{remark}[theorem]{Remark}
\newcommand{\numberset}{\mathbb}
\newcommand{\N}{\numberset{N}}
\newcommand{\Z}{\numberset{Z}}
\newcommand{\F}{\numberset{F}}
\newcommand{\mC}{\mathcal{C}}
\newcommand{\mD}{\mathcal{D}}
\newcommand{\mP}{\mathcal{P}}
\newcommand{\mA}{\mathcal{A}}
\newcommand\qbin[2]{\left[\begin{matrix} #1 \\ #2 \end{matrix} \right]_q}
\newcommand\mat[2]{\F_q^{#1 \times #2}}
\newcommand{\ma}{\mat{k}{m}}
\newcommand{\cs}{\textnormal{colsp}}
\newcommand\rk{\textnormal{rk}}
\newcommand\dg{d_\textnormal{G}}
\newcommand\rkg{\textnormal{rk}_\textnormal{G}}
\title{\textbf{Codes Endowed With the Rank Metric}\vspace{2em}}
\author{
  Elisa Gorla \\
  {Institut de Math\'{e}matiques} \\ 
  {Universit\'{e} de Neuch\^{a}tel,
Switzerland}
\\ \texttt{elisa.gorla@unine.ch}
  \and
   \ \\ Alberto Ravagnani \\
  {Department of Electrical and Computer Engineering} \\ 
  {University of Toronto,
Canada}
\\ \texttt{ravagnani@ece.utoronto.ca}
}
\date{}
\begin{document}

\maketitle

\vspace{1em}

\abstract{We review the main results of the theory of rank-metric codes, with
emphasis on their combinatorial properties. We study their
duality theory and MacWilliams identities,
comparing in particular rank-metric codes in vector and matrix representation.
We then investigate the combinatorial structure of MRD codes and optimal anticodes in the rank metric,
describing how they relate to each other.}

\vspace{2em}

\section*{Introduction}

A $q$-ary rank-metric code is a set of matrices over $\F_q$ equipped with the rank distance, which measures
the rank of the difference of a pair of matrices. Rank-metric codes were first studied in \cite{del1} by Delsarte 
for combinatorial interest.

More recently, codes endowed with the rank metric have been re-discovered 
for error correction in the context of linear network coding, and
featured prominently in the coding theory literature.

In linear network coding, a source attempts to transmit
 information packets to multiple destinations via a network of intermediate nodes. 
 The 
 nodes compute and forward in the direction of the sinks \textit{linear functions} 
 of the received packets, rather than simply routing them. 
 In \cite{origine,origine2} it was shown that linear network coding
 achieves the optimal multicast throughput over sufficiently large alphabets.
 
 Rank-metric codes were proposed in \cite{KK1,metrics2} for end-to-end error 
 correction in noisy and adversarial networks. 
 In this context, as shown in  \cite{onmetrics}, the correction capability of a rank-metric code
 is measured by a fundamental parameter, called the \textit{minimum rank distance}
 of the code.
 
 In this work we survey the main results 
 of the mathematical theory of rank-metric codes, with emphasis on
 their combinatorial structure.
 
 In Section \ref{sec:1} we introduce the most important parameters
 of a rank-metric code, namely, the minimum distance, the weight
 distribution, and the distance distribution. We then define the trace-dual of a linear rank-metric code,
 and compare the duality theories of codes in matrix and vector representation.
 In particular, we show that the former generalizes the latter.
 
 Section \ref{sec:mw} is devoted to the duality theory of codes endowed with the rank metric.
 We study how combinatorial properties of a linear code relate to 
 combinatorial properties of the dual code. In particular, we 
 show that the weight distribution of a linear code and the weight distribution
 of its dual code determine each other via a MacWilliams-type transformation.
We also show an application of the MacWilliams identities for the rank metric
 to an enumerative combinatorics problem.

In Section \ref{sec:MRD} we study codes that have the largest possible cardinality for their parameters.
These  are called 
\textit{Maximum Rank Distance} codes ({MRD} in short) and have very remarkable properties.
We first show the existence of linear MRD codes for all choices of the parameters and of the field
size. Then we prove that the dual of a linear MRD code is MRD. Finally, we show that 
the distance distribution of a (possibly non-linear) rank-metric code
is completely determined by its parameters.

Section \ref{sec:anti} is devoted to rank-metric anticodes, i.e., sets of matrices 
where the distance between any two of them is bounded from above by a given integer.
We study how codes and anticodes relate to each other, deriving in particular an upper bound for the 
cardinality of any anticode of given parameters. We conclude the section showing that the dual of
an optimal linear anticode is an optimal anticode.

\section{Rank-metric codes}
\label{sec:1}

In the sequel $q$ denotes a fixed prime power, and $\F_q$ the finite field with $q$ elements.
Moreover, $k$ and $m$ denote positive integers with $k \le m$ without loss of generality, and
$\mat{k}{m}$ is the space of $k \times m$ matrices over $\F_q$. 
Finally, for given integers $a, b \in \N$ we denote by
$$\qbin{a}{b}$$
the $q$-ary binomial coefficient of $a$ and $b$, which counts the number of 
$b$-dimensional subspaces of an $a$-dimensional space over $\F_q$.
See e.g.  \cite[Section 1.7]{stanley} for details.

\begin{definition}
 The \textbf{rank distance} is the function $d:
 \mat{k}{m} \times \mat{k}{m} \to \N$
 defined by 
 $d(M,N)=\rk(M-N)$ for all $M,N \in \mat{k}{m}$.
\end{definition}

It is easy to check that $d$ is indeed a distance function on $\ma$.

\begin{definition}
 A (\textbf{rank-metric}) \textbf{code} over $\F_q$  is a non-empty subset
 $\mC \subseteq \mat{k}{m}$. When $|\mC|\ge 2$, the 
 \textbf{minimum distance} of $\mC$ is the positive integer 
 $$d(\mC)= \min\{d(M,N) \ | \ M,N \in \mC, \ M \neq N\}.$$
 A code $\mC$ is \textbf{linear}
 if it is an $\F_q$-linear subspace of $\mat{k}{m}$. In this case 
  its \textbf{dual code} is defined 
 as $$\mC^\perp=\{N \in \mat{k}{m} \ | \ \mbox{Tr}(MN^t) =0 \mbox{ for all } 
 M \in \mC\} \subseteq \mat{k}{m},$$ where $\mbox{Tr}(\cdot)$ denotes
  the trace of a square 
 $k\times k$ matrix.
\end{definition}

The map $(M,N) \to \mbox{Tr}(MN^t) \in \F_q$ is a scalar product on 
$\mat{k}{m}$, i.e., it is symmetric, bilinear and non-degenerate.
In particular, the dual of a linear code is a linear code of 
dimension
$$\dim(\mC^\perp)=km-\dim(\mC).$$

Other fundamental parameters of a rank-metric code are the following.

\begin{definition}
 The \textbf{weight distribution} and 
 the \textbf{distance distribution}
 of a  code 
 $\mC$ are the collections $\{W_i(\mC) \ | \ i \in \N\}$ and 
 $\{D_i(\mC) \ | \ i \in \N\}$ respectively, where
 $$W_i(\mC)=|\{M \in \mC \ | \ \rk(M)=i \}|, \ \ \ \ \  
 D_i(\mC)=1/|\mC| \cdot |\{(M,N) \in \mC^2 \ | \ d(M,N)=i \}|$$ 
 for all $i \in \N$.
 \end{definition}

If $\mC$ is a linear code, then 
for all $P \in \mC$ there are precisely 
$|\mC|$ pairs $(M,N) \in \mC^2$ such that $M-N=P$. Therefore 
$$
 D_i(\mC) = 1/|\mC| \cdot \sum_{\substack{P \in \mC \\ 
 \textnormal{rk}(P)=i}} |\{ (M,N) \in \mC^2 \ | \ M-N=P\}| = W_i(\mC)
$$
for all $i \in \N$.
Moreover, if $|\mC| \ge 2$ then $d(\mC)= \min \{\rk(M) \ | \ M \in \mC, \ M \neq 0\}$.

In \cite{gabid}, Gabidulin proposed independently a different notion of
rank-metric code, in which the codewords are vectors with entries from an extension field
$\F_{q^m}$ rather than matrices over $\F_q$.

\begin{definition}
 The \textbf{rank} of a vector $v=(v_1,...,v_k)  \in \F_{q^m}^k$
 is the dimension of the linear spaces generated over $\F_q$
 by its entries, i.e., $\rkg(v)=\mbox{dim}_{\F_q} \langle v_1,...,v_k \rangle$.
 The \textbf{rank distance} between vectors $v,w \in \F_{q^m}^k$ is 
 $\dg(v,w)=\rkg(v-w)$.
 \end{definition}

One can check that $\dg$ is a distance function on 
 $\F_{q^m}^k$.
 
\begin{definition}
 A \textbf{vector rank-metric code} over $\F_{q^m}$  is a non-empty subset
 $C \subseteq \F_{q^m}^k$. When $|C|\ge 2$, the 
 \textbf{minimum distance} of $C$ is the positive integer 
 $$\dg(C)= \min\{\dg(v,w) \ | \ v,w \in C, \ v \neq w\}.$$
 The code $C$ is \textbf{linear}
 if it is an $\F_{q^m}$-linear subspace of $\F_{q^m}^k$.
 In this case the \textbf{dual} of $C$
 is defined as 
 \begin{equation*} \label{defDG}
 C^\perp=\left\{w \in \F_{q^m}^k \ | \ 
 \sum_{i=1}^k v_iw_i =0 \mbox{ for all } 
v \in C\right\} \subseteq \F_{q^m}^k.
 \end{equation*}
\end{definition}

The map $(v,w) \mapsto \sum v_iw_i$ is an $\F_{q^m}$-scalar product 
on $\F_{q^m}^k$. Therefore for all linear 
vector rank-metric codes $C \subseteq \F_{q^m}^k$ we have 
$$\dim_{\F_{q^m}} (C^\perp) = k-\dim_{\F_{q^m}}(C).$$
 
\begin{definition}
The \textbf{weight distribution} and 
 the \textbf{distance distribution}
 of a  vector rank-metric code 
 $C$ are the integer vectors  $(W_i(C) \ | \ i \in \N)$ and 
 $(D_i(C) \ | \ i \in \N)$ respectively, where
 $$W_i(C)=|\{v \in C \ | \ \rkg(v)=i \}|, \ \ \ \ \  
 D_i(C)=1/|C| \cdot |\{(v,w) \in C^2 \ | \ \dg(v,w)=i \}|$$ 
 for all $i \in \N$.
\end{definition}

There exists a natural way to associate to a 
vector rank-metric code a
code in matrix representation with the
same cardinality and metric properties.

\begin{definition}\label{mtxassoc}
 Let $\Gamma= \{ \gamma_1,...,\gamma_m\}$ be a basis
of $\F_{q^m}$ over $\F_q$. The matrix \textbf{associated} to a vector
$v \in \F_{q^m}^k$ with respect to $\Gamma$ is the $k \times m$ 
matrix 
$\Gamma(v)$ with entries in $\F_q$ defined by 
$$v_i= \sum_{j=1}^m
\Gamma(v)_{ij}\gamma_j \ \ \ \ \  \mbox{for all $i=1,...,k$}.$$
 The rank-metric code
\textbf{associated} to a vector rank-metric code $C \subseteq \F_{q^m}^k$ with respect to $\Gamma$
 is $$\Gamma(C)=\{ \Gamma(v) \ | \  v \in C\}
\subseteq \ma.$$
\end{definition}

Notice that in the previous definition the $i$-th row of
 $\Gamma(v)$ is 
the expansion
of the entry $v_i$ over the basis $\Gamma$. 

The proof of the following result is standard and 
left to the reader.

\begin{proposition}\label{gabtodel}
 For every $\F_q$-basis $\Gamma$ of 
 $\F_{q^m}$ the map 
$v \mapsto \Gamma(v)$ is an $\F_q$-linear bijective isometry 
$(\F_{q^m}^k,\dg) \to (\ma,d)$.

In particular, if $C \subseteq \F_{q^m}^k$ is a vector rank-metric code,
then $\Gamma(C)$ has the same cardinality, rank distribution and 
distance distribution as $C$.
Moreover, if $|C| \ge 2$ then 
$\dg(C)=d(\Gamma(C))$.
\end{proposition}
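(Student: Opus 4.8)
The plan is to verify directly that the map $v \mapsto \Gamma(v)$ has all the claimed properties, then observe that the "in particular" statement is an immediate consequence. First I would check that the map is $\F_q$-linear: since $\Gamma = \{\gamma_1,\dots,\gamma_m\}$ is a basis of $\F_{q^m}$ over $\F_q$, the coordinate functions $v_i \mapsto (\Gamma(v)_{i1},\dots,\Gamma(v)_{im})$ are $\F_q$-linear by uniqueness of the expansion in a basis, and this is applied row by row. Injectivity follows because $\Gamma(v) = 0$ forces $v_i = \sum_j 0 \cdot \gamma_j = 0$ for all $i$; surjectivity and the equality of cardinalities then follow from a dimension count, since both $\F_{q^m}^k$ and $\ma$ have $\F_q$-dimension $km$ (equivalently, $|\F_{q^m}^k| = q^{km} = |\ma|$).

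The key step is the isometry property, i.e., $\rkg(v) = \rk(\Gamma(v))$ for every $v \in \F_{q^m}^k$; by $\F_q$-linearity of the map this immediately upgrades to $\dg(v,w) = \rkg(v-w) = \rk(\Gamma(v)-\Gamma(w)) = \rk(\Gamma(v-w)) = d(\Gamma(v),\Gamma(w))$. To see the rank identity, I would argue that both quantities equal the $\F_q$-dimension of the same subspace of $\F_{q^m}$. On one hand, $\rkg(v) = \dim_{\F_q}\langle v_1,\dots,v_k\rangle$ by definition. On the other hand, identifying $\F_{q^m}$ with $\F_q^m$ via the basis $\Gamma$, the entry $v_i$ corresponds precisely to the $i$-th row of $\Gamma(v)$; hence the $\F_q$-span of $v_1,\dots,v_k$ in $\F_{q^m}$ corresponds to the $\F_q$-row space of $\Gamma(v)$ in $\F_q^m$, which has dimension $\rk(\Gamma(v))$. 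So the two dimensions agree.

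The "in particular" clause is then a formal consequence: a bijective isometry between two finite metric spaces automatically preserves cardinality; it sends $\{v \in C : \rkg(v) = i\}$ bijectively onto $\{M \in \Gamma(C) : \rk(M) = i\}$, giving $W_i(C) = W_i(\Gamma(C))$; and it sends the pair set $\{(v,w) \in C^2 : \dg(v,w) = i\}$ bijectively onto $\{(M,N) \in \Gamma(C)^2 : d(M,N) = i\}$ while also being a bijection $C \to \Gamma(C)$, so $|C| = |\Gamma(C)|$ and hence $D_i(C) = D_i(\Gamma(C))$. Finally, taking the minimum over all nonzero distances in the (common) distance spectrum yields $\dg(C) = d(\Gamma(C))$ when $|C| \ge 2$.

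I do not expect a genuine obstacle here — the statement is essentially bookkeeping — but the one point requiring a moment's care is making precise the identification "row of $\Gamma(v)$ $\leftrightarrow$ coordinate vector of $v_i$ in the basis $\Gamma$" and concluding that $\F_q$-linear independence is transported correctly between $\langle v_1,\dots,v_k\rangle \subseteq \F_{q^m}$ and the row space of $\Gamma(v) \subseteq \F_q^m$. Once the coordinate isomorphism $\F_{q^m} \xrightarrow{\sim} \F_q^m$ determined by $\Gamma$ is written down explicitly, this is transparent.
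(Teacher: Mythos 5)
Your proof is correct and is precisely the standard argument that the paper explicitly leaves to the reader: linearity and bijectivity by coordinate expansion and a dimension count, and the isometry property via the identification of the row space of $\Gamma(v)$ with $\langle v_1,\dots,v_k\rangle_{\F_q}$ under the coordinate isomorphism $\F_q^m \cong \F_{q^m}$ determined by $\Gamma$. Nothing is missing.
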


In the remainder of the section we compare the duality theories of 
matrix and vector rank-metric codes, showing that the former generalizes the latter.
The following results appear in~\cite{albrank}.

Given an $\F_{q^m}$-linear vector rank-metric code $C \subseteq \F_{q^m}^k$ and a basis 
$\Gamma$ of $\F_{q^m}$
over $\F_q$,
it is natural to ask whether the codes 
$\Gamma(C^\perp)$ and 
$\Gamma(C)^\perp$ coincide or not.
The answer is negative in
general, as we show in the following example.

\begin{example}
 Let $q=3$, $k=m=2$ and
$\F_{3^2}=\F_3[\eta]$, where
$\eta$ is a root of the irreducible primitive polynomial
 $x^2+2x+2 \in \F_3[x]$.
Let $\xi=\eta^2$, so that $\xi^2+1=0$.
Set $\alpha=(\xi,2)$, and let $C \subseteq \F_{3^2}^2$ be the 1-dimensional
vector rank-metric code generated by $\alpha$ over $\F_{3^2}$.  Take 
$\Gamma= \{ 1,\xi\}$
as 
basis of $\F_{3^2}$ over $\F_3$. One can 
check that $\Gamma(C)$
is generated over $\F_3$ by the two matrices 
$$\Gamma(\alpha)= \begin{bmatrix} 
                  0 & 1 \\ 2 & 0 
                 \end{bmatrix}, \ \ \ \ \ \ \  \Gamma(\xi \alpha)=
\begin{bmatrix} 
                  -1 & 0 \\ 0 & 2 
                 \end{bmatrix}.
                 $$
Let $\beta=(\xi,1) \in \F_{3^2}^2$. We have 
$\alpha_1 \beta_1 + \alpha_2 \beta_2 = 1 \neq 0$, and so $\beta \notin C^\perp$. It follows 
$\Gamma(\beta) \notin \Gamma(C^\perp)$.
On the other
hand, 
$$\Gamma(\beta)=
\begin{bmatrix} 
                  0 & 1 \\ 1 & 0 
                 \end{bmatrix},$$
and it is easy to see that $\Gamma(\beta)$ 
is trace-orthogonal to both
$\Gamma(\alpha)$ and $\Gamma(\xi \alpha)$. Therefore
$\Gamma(\beta) \in
\Gamma(C)^\perp$, hence $\Gamma(C)^\perp \neq
\Gamma(C^\perp)$.
\end{example}

Although the duality notions for matrix and vector rank-metric codes do not coincide, there is a simple
relation between them via orthogonal bases of finite fields.

 Let $\mbox{Trace}:\F_{q^m} \to \F_q$ be the map
defined by
$\mbox{Trace}(\alpha)= \alpha +\alpha^q+ \cdots + \alpha^{q^{m-1}}$ for all
 $\alpha \in \F_{q^m}$.  Bases $\Gamma= \{ \gamma_1,...,\gamma_m\}$ and 
$\Gamma'= \{ \gamma'_1,...,\gamma'_m\}$ 
of $\F_{q^m}$ over
$\F_q$ are called \textbf{orthogonal} if
 $\mbox{Trace}(\gamma'_i\gamma_j)=\delta_{ij}$ for all $i,j \in \{ 1,...,m\}$.
It is well-known that 
every basis $\Gamma$ of $\F_{q^m}$ over
$\F_q$ has a unique orthogonal basis $\Gamma'$
(see \cite{niede}, page 54).

\begin{theorem} \label{self}
Let $C \subseteq \F_{q^m}^k$ be an $\F_{q^m}$-linear 
vector rank-metric code, and let 
$\Gamma$, $\Gamma'$
be orthogonal bases of $\F_{q^m}$ over $\F_q$. We have
$$\Gamma'(C^\perp)= \Gamma(C)^\perp.$$
In particular, $C$ has the same weight 
distribution as $\Gamma(C)$, and $C^\perp$
has the same weight distribution as $\Gamma(C)^\perp$.
\end{theorem}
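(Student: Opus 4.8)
The plan is to prove the key identity $\Gamma'(C^\perp) = \Gamma(C)^\perp$ by a direct computation relating the two bilinear forms involved, namely the $\F_{q^m}$-scalar product $\langle v, w \rangle = \sum_i v_i w_i$ on $\F_{q^m}^k$ and the trace form $(M,N) \mapsto \mathrm{Tr}(MN^t)$ on $\ma$. The link between them is the field trace $\mathrm{Trace}: \F_{q^m} \to \F_q$: a vector $w$ lies in $C^\perp$ iff $\langle v, w\rangle = 0$ for all $v \in C$, and since $\F_{q^m}/\F_q$ is a separable extension, this is equivalent to $\mathrm{Trace}(\langle v, w\rangle) = 0$ for all $v \in C$ (using that $C$ is an $\F_{q^m}$-subspace, so $\langle v, w\rangle$ ranges over the whole one-dimensional $\F_{q^m}$-space $\langle v,w\rangle\cdot\F_{q^m}$ unless it is zero, and $\mathrm{Trace}$ is not identically zero on any nonzero $\F_{q^m}$-line). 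So the first step is to record this equivalence.

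The second and main step is the computation: for $v, w \in \F_{q^m}^k$ I would expand $v_i = \sum_j \Gamma(v)_{ij}\gamma_j$ and $w_i = \sum_\ell \Gamma'(w)_{i\ell}\gamma'_\ell$, so that
\begin{align*}
\mathrm{Trace}\!\left(\sum_{i=1}^k v_i w_i\right) &= \sum_{i=1}^k \sum_{j=1}^m \sum_{\ell=1}^m \Gamma(v)_{ij}\,\Gamma'(w)_{i\ell}\,\mathrm{Trace}(\gamma_j \gamma'_\ell) \\
&= \sum_{i=1}^k \sum_{j=1}^m \Gamma(v)_{ij}\,\Gamma'(w)_{ij} = \mathrm{Tr}\!\left(\Gamma(v)\,\Gamma'(w)^t\right),
\end{align*}
where the middle equality uses exactly the orthogonality relation $\mathrm{Trace}(\gamma_j\gamma'_\ell) = \delta_{j\ell}$ and the last is the definition of the matrix trace form (the $(i,i)$ entry of $\Gamma(v)\Gamma'(w)^t$ is $\sum_j \Gamma(v)_{ij}\Gamma'(w)_{ij}$). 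This is the heart of the argument and I expect the only subtlety to be bookkeeping with indices; there is no real obstacle once the orthogonality of the bases is invoked.

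Combining the two steps: $w \in C^\perp$ iff $\mathrm{Trace}(\langle v,w\rangle) = 0$ for all $v \in C$ iff $\mathrm{Tr}(\Gamma(v)\Gamma'(w)^t) = 0$ for all $v \in C$ iff $\Gamma'(w) \in \Gamma(C)^\perp$ (using that $\Gamma$ is a bijection, so $\{\Gamma(v) : v \in C\} = \Gamma(C)$). Hence $\Gamma'(C^\perp) = \Gamma(C)^\perp$, and since $\Gamma'$ is a bijection both sides have the same cardinality, which also confirms consistency with the dimension formulas. For the ``in particular'' clauses: $C$ and $\Gamma(C)$ have the same weight distribution because $v \mapsto \Gamma(v)$ is a rank-preserving bijection by Proposition~\ref{gabtodel}; applying the same proposition to $C^\perp$ with the basis $\Gamma'$ gives that $C^\perp$ and $\Gamma'(C^\perp) = \Gamma(C)^\perp$ have the same weight distribution. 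This finishes the proof.
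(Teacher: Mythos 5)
Your proposal is correct, and its computational core is exactly the paper's: expand $v_i=\sum_j\Gamma(v)_{ij}\gamma_j$ and $w_i=\sum_\ell\Gamma'(w)_{i\ell}\gamma'_\ell$, apply the field trace $\mbox{Trace}:\F_{q^m}\to\F_q$, and use $\mbox{Trace}(\gamma_j\gamma'_\ell)=\delta_{j\ell}$ to identify $\mbox{Trace}\bigl(\sum_i v_iw_i\bigr)$ with $\mbox{Tr}\bigl(\Gamma(v)\Gamma'(w)^t\bigr)$. Where you diverge is in how the set equality is concluded. The paper only extracts the implication ``$\sum_i v_iw_i=0$ $\Rightarrow$ $\mbox{Tr}(\Gamma(v)\Gamma'(w)^t)=0$,'' which yields the single inclusion $\Gamma'(C^\perp)\subseteq\Gamma(C)^\perp$, and then closes the gap by a dimension count via Proposition~\ref{gabtodel}. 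You instead upgrade the implication to an equivalence by first observing that, because $C$ is $\F_{q^m}$-linear, the set $\{\langle v,w\rangle : v\in C\}$ is an $\F_{q^m}$-subspace of $\F_{q^m}$ (hence $\{0\}$ or all of $\F_{q^m}$) and the field trace does not vanish identically on $\F_{q^m}$; so $\mbox{Trace}(\langle v,w\rangle)=0$ for all $v\in C$ forces $\langle v,w\rangle=0$ for all $v\in C$. That extra (easy but essential) lemma lets you get both inclusions at once without comparing dimensions. Note that your argument genuinely uses the $\F_{q^m}$-linearity of $C$ at this point, just as the paper's dimension count does; for a merely $\F_q$-linear subset of $\F_{q^m}^k$ neither route would go through. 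Both proofs are valid; yours is self-contained on this step, the paper's is marginally shorter given that Proposition~\ref{gabtodel} is already available.
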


\begin{proof} Write $\Gamma=\{ \gamma_1,...,\gamma_m\}$ and $\Gamma'=\{
\gamma'_1,...,\gamma'_m\}$. 
 Let $M \in \Gamma'(C^\perp)$ and $N \in
\Gamma(C)$. There exist $\alpha \in C^\perp$ and $\beta \in C$
 such that
$M=\Gamma'(\alpha)$ and $N=\Gamma(\beta)$. By 
Definition~\ref{mtxassoc} we have
\begin{equation} 
 0 \ = \ \sum_{i=1}^k
\alpha_i\beta_i\ = \ \sum_{i=1}^k \sum_{j=1}^m M_{ij}\gamma'_j \sum_{t=1}^m
N_{it} \gamma_t \ = \ \sum_{i=1}^k \sum_{j=1}^m \sum_{t=1}^m
M_{ij}N_{it} \gamma'_j \gamma_t.\label{eqqq}
\end{equation}
Applying the function $\mbox{Trace}:\F_{q^m} \to \F_q$ to both
sides of
equation (\ref{eqqq}) we obtain
\begin{equation*}
 0 = \mbox{Trace} \left( \sum_{i=1}^k \sum_{j=1}^m \sum_{t=1}^m
M_{ij}N_{it} \gamma'_j \gamma_t \right) 
 =  \sum_{i=1}^k \sum_{j=1}^m \sum_{t=1}^m
M_{ij}N_{it} \mbox{Trace}(\gamma'_j \gamma_t)  =
 \mbox{Tr}(MN^t). 
\end{equation*}
Therefore $\Gamma'(C^\perp)
\subseteq
\Gamma(C)^\perp$. Proposition \ref{gabtodel} 
implies that 
$\Gamma'(C^\perp)$ and $\Gamma(C)^\perp$ have the same dimension over
$\F_q$.
Hence the two codes are equal. The second part of the statement follows
from Proposition \ref{gabtodel}.
\end{proof}

Theorem \ref{self} shows that the duality theory of $\F_q$-linear rank-metric codes in matrix representation 
can be 
regarded as a
generalization of the duality theory of $\F_{q^m}$-linear 
vector rank-metric codes. 
For this reason, in the sequel we only treat 
rank-metric codes in matrix representation.

\section{MacWilliams identities for the rank metric} \label{sec:mw}

This section is devoted to the duality theory of codes endowed with the rank metric.
We concentrate on linear rank-metric codes, and show that the weight distributions
of a code $\mC$ and its dual code $\mC^\perp$ determine each other via a 
MacWilliams-type transformation. This result was established by 
Delsarte in \cite[Theorem 3.3]{del1}
using the machinery of association schemes, and may be regarded as the 
rank-analogue of a celebrated theorem by MacWilliams on the weight
distribution of linear codes endowed with the Hamming metric (see
\cite{MWhamming}). In this section we present a lattice-theoretic proof 
inspired by \cite[Theorem 27]{alblattic}.

\begin{notation}
 We denote  by $\cs(M) \subseteq \F_q^k$ the $\F_q$-space generated by the columns of 
 a matrix $M \in \mat{k}{m}$. Given a code $\mC \subseteq \ma$ and an $\F_q$-subspace 
 $U \subseteq \F_q^k$, we let
 $$\mC(U)=\{M \in \mC  \ | \ \cs(M) \subseteq U\} \subseteq \ma$$
 be the set of matrices in $\mC$ whose columnspace is contained in $U$.
 \end{notation}
 
Note that for all $M,N \in \ma$ we  have
$\cs(M+N) \subseteq \cs(M)+\cs(N)$. As a
consequence, 
if $U \subseteq \F_q^k$ is an $\F_q$-linear subspace and $\mC \subseteq \ma$ is a linear code, then 
$\mC(U)$
is a linear code as well.

We start with a series of preliminary results. In the sequel we denote by $U^\perp$ the 
orthogonal (or dual) of an $\F_q$-vector space 
$U \subseteq \F_q^k$ with respect to the standard inner product of 
$\F_q^k$. It will be clear from  context if by ``$\perp$'' we denote the
trace-dual in $\ma$ or the standard dual in $\F_q^k$.

\begin{lemma} \label{dimensioni}
 Let  $U \subseteq \F_q^k$ be a subspace.  The following hold.
 \begin{enumerate}
\item $\dim
(\ma(U))=m \cdot \dim(U)$. \label{dimensioni1}
\item $\ma(U)^\perp = \ma(U^\perp)$. \label{dimensioni2}
 \end{enumerate}
\end{lemma}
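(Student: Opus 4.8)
The plan is to prove the two statements in order, using part~\ref{dimensioni1} to reduce part~\ref{dimensioni2} to a single inclusion.

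For part~\ref{dimensioni1}, the point is that the condition $\cs(M)\subseteq U$ simply requires each of the $m$ columns of $M$ to lie in $U$, and these constraints are independent. Formally, the map sending $M \in \ma$ to the ordered tuple of its columns is an $\F_q$-linear isomorphism $\ma \to (\F_q^k)^m$, under which $\ma(U)$ corresponds to $U^m$; hence $\dim(\ma(U)) = m\cdot\dim(U)$. (Equivalently, after a change of coordinates on $\F_q^k$ making $U$ the span of the first $\dim(U)$ standard basis vectors, $\ma(U)$ becomes the space of matrices whose last $k-\dim(U)$ rows vanish, which is isomorphic to $\F_q^{\dim(U)\times m}$.)

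For part~\ref{dimensioni2}, I would first record the identity that does all the work: for $M,N \in \ma$,
$$\mbox{Tr}(MN^t) \ = \ \sum_{i=1}^k\sum_{j=1}^m M_{ij}N_{ij} \ = \ \sum_{j=1}^m \langle M_{\cdot j}, N_{\cdot j}\rangle,$$
where $M_{\cdot j}$ denotes the $j$-th column of $M$ and $\langle\cdot,\cdot\rangle$ is the standard inner product on $\F_q^k$. From this, if $N \in \ma(U^\perp)$ and $M \in \ma(U)$, then every summand $\langle M_{\cdot j},N_{\cdot j}\rangle$ vanishes because $M_{\cdot j}\in U$ and $N_{\cdot j}\in U^\perp$; hence $\mbox{Tr}(MN^t)=0$, which proves the inclusion $\ma(U^\perp)\subseteq\ma(U)^\perp$. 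To promote this to an equality I would compare dimensions: by the dimension formula for trace-duals recalled after the definition of $\mC^\perp$, and by part~\ref{dimensioni1}, $\dim(\ma(U)^\perp) = km - m\dim(U) = m(k-\dim(U)) = m\dim(U^\perp)$, using that the standard inner product on $\F_q^k$ is non-degenerate so that $\dim(U)+\dim(U^\perp)=k$; on the other hand part~\ref{dimensioni1} applied to $U^\perp$ gives $\dim(\ma(U^\perp)) = m\dim(U^\perp)$. Equal dimensions plus the inclusion give $\ma(U^\perp)=\ma(U)^\perp$.

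I do not expect a genuine obstacle: the only subtlety is getting the column-wise decomposition $\mbox{Tr}(MN^t)=\sum_j \langle M_{\cdot j},N_{\cdot j}\rangle$ right, since this is exactly what translates trace-duality in $\ma$ into ordinary duality of column spaces in $\F_q^k$. As an alternative to the dimension count, one can prove the reverse inclusion directly: for $N\in\ma(U)^\perp$, $u\in U$, and any index $j$, applying the displayed identity to the rank-one matrix $M=u\,e_j^t\in\ma(U)$ yields $\langle u,N_{\cdot j}\rangle=0$, so $N_{\cdot j}\in U^\perp$ for every $j$, i.e.\ $N\in\ma(U^\perp)$.
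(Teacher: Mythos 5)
Your proposal is correct and follows essentially the same route as the paper: the column-wise identity $\mbox{Tr}(MN^t)=\sum_j\langle M_{\cdot j},N_{\cdot j}\rangle$ gives the inclusion $\ma(U^\perp)\subseteq\ma(U)^\perp$, and equality follows from the dimension count supplied by part~\ref{dimensioni1}. Your direct identification of $\ma(U)$ with $U^m$ in part~\ref{dimensioni1} is a cosmetic variant of the paper's change-of-coordinates argument (which you also note), and your rank-one-matrix alternative for the reverse inclusion is a valid bonus.
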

\begin{proof}
\begin{enumerate}
 \item Let $s=\dim(U)$ and $V=\{
(x_1,...,x_k) \in \F_q^k \ | \ x_i=0 \mbox{ for } i >s \} \subseteq \F_q^k$. There exists an
$\F_q$-isomorphism $g:\F_q^k \to \F_q^k$ that maps $U$ to $V$. Let $G \in
\mat{k}{k}$ be the invertible matrix associated to $g$
with
respect to the canonical
basis $\{e_1,...,e_k\}$ of $\F_q^k$, i.e.,
$$g(e_j) = \sum_{i=1}^k G_{ij} e_i \ \ \ \ \ \mbox{ for all }
j=1,...,k.$$ 
The map
$M \mapsto GM$ is an $\F_q$-isomorphism
$\ma(U) \to \ma(V)$. Property \ref{dimensioni1} of the lemma now directly follows from the definition 
of $\ma(V)$. 

\item  Let $N \in \ma(U^\perp)$ and $M \in \ma(U)$. Using the
definition of trace-product one sees that 
$\mbox{Tr}(MN^t) =\sum_{i=1}^m  \langle M_i, N_i \rangle$, where $\langle \cdot
, \cdot \rangle$ is the standard inner product of $\F_q^k$, and $M_i$, $N_i$
denote the $i$-th column of $M$ and $N$ (respectively). Each
 column of $N$ belongs to $U^\perp$, and
each column of $M$ belongs to
$U$. Therefore $\mbox{Tr}(MN^t)=0$, hence
$\ma(U^\perp) \subseteq \ma(U)^\perp$. By property 
\ref{dimensioni1}, the two spaces 
$\ma(U^\perp)$ and $\ma(U)^\perp$ have the same
dimension over $\F_q$. Therefore they are
equal.
\end{enumerate}
\end{proof}

The following result is \cite[Lemma 28]{albrank}.

\begin{proposition} \label{tecn}
 Let $\mC \subseteq \ma$ be a linear code, and let $U
\subseteq \F_q^k$ be a subspace of dimension $u$ over $\F_q$. Then 
 $$|\mC(U)|= \frac{|\mC|}{ q^{m(k-u)}}
|\mC^\perp(U^\perp)|.$$
\end{proposition}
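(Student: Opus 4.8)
The plan is to count in two ways the cardinality of the set
$$S = \{(M,N) \in \mC \times \mC^\perp \ | \ \mbox{Tr}(MN^t) \text{ restricted appropriately}\},$$
but more directly, to apply Lemma~\ref{dimensioni} together with the dimension formula for trace-duals. First I would observe that $\mC(U)$ is a linear code (as noted after the Notation), and likewise $\mC^\perp(U^\perp)$ is linear. The key algebraic input is that taking the trace-dual inside $\ma$ converts intersections into sums: for linear subspaces $\mA, \mathcal{B} \subseteq \ma$ one has $(\mA \cap \mathcal{B})^\perp = \mA^\perp + \mathcal{B}^\perp$. Applying this with $\mA = \mC$ and $\mathcal{B} = \ma(U)$, and using part~\ref{dimensioni2} of Lemma~\ref{dimensioni} that $\ma(U)^\perp = \ma(U^\perp)$, gives
$$\big(\mC(U)\big)^\perp = \big(\mC \cap \ma(U)\big)^\perp = \mC^\perp + \ma(U^\perp).$$

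Next I would pass to dimensions. From the displayed identity,
$$km - \dim \mC(U) = \dim\!\big(\mC^\perp + \ma(U^\perp)\big) = \dim \mC^\perp + \dim \ma(U^\perp) - \dim\!\big(\mC^\perp \cap \ma(U^\perp)\big).$$
Now $\dim \mC^\perp \cap \ma(U^\perp) = \dim \mC^\perp(U^\perp)$ by definition of the truncation, $\dim \ma(U^\perp) = m(k-u)$ by part~\ref{dimensioni1} of Lemma~\ref{dimensioni} (since $\dim U^\perp = k - u$), and $\dim \mC^\perp = km - \dim \mC$. Substituting and cancelling $km$ from both sides yields
$$\dim \mC(U) = \dim \mC - m(k-u) + \dim \mC^\perp(U^\perp).$$
Exponentiating base $q$ gives exactly $|\mC(U)| = |\mC| \cdot q^{-m(k-u)} \cdot |\mC^\perp(U^\perp)|$, which is the claim.

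The only step that is not entirely routine is the identity $(\mA \cap \mathcal{B})^\perp = \mA^\perp + \mathcal{B}^\perp$, which must be justified in the trace-product geometry on $\ma$: the inclusion $\mA^\perp + \mathcal{B}^\perp \subseteq (\mA \cap \mathcal{B})^\perp$ is immediate, and equality then follows because the trace-product is non-degenerate, so $\dim \mathcal{X} + \dim \mathcal{X}^\perp = km$ for every subspace $\mathcal{X}$, and a dimension count (using inclusion–exclusion on both sides) forces the containment to be an equality. I expect this to be the main obstacle only in the sense of bookkeeping; conceptually it is the standard fact that orthogonal complement with respect to a non-degenerate bilinear form is an inclusion-reversing anti-isomorphism of the subspace lattice that swaps $\cap$ and $+$. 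Alternatively, one could avoid invoking this identity by running the double-counting argument on pairs directly, but the lattice-theoretic route above is cleaner and matches the spirit of the cited proof in~\cite{alblattic}.
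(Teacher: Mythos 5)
Your proof is correct and follows essentially the same route as the paper: both compute $(\mC\cap\ma(U))^\perp=\mC^\perp+\ma(U^\perp)$ via Lemma~\ref{dimensioni} and then conclude by a dimension count on the sum; your additive bookkeeping with dimensions is just the logarithm of the paper's manipulation of cardinalities. Your remark that the identity $(\mA\cap\mathcal{B})^\perp=\mA^\perp+\mathcal{B}^\perp$ requires non-degeneracy of the trace form is a point the paper uses implicitly, so making it explicit is a welcome addition.
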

\begin{proof}
 We have  
 $\mC(U)^\perp = (\mC \cap \ma(U))^\perp=\mathcal{\mC}^\perp
+\ma(U)^\perp=\mC^\perp + \ma(U^\perp)$,
where the last equality follows from part \ref{dimensioni2} of 
Lemma \ref{dimensioni}. Therefore
 \begin{equation} \label{eq2}
 |\mC(U)| \cdot |\mC^\perp +
\ma(U^\perp)|=q^{km}.
 \end{equation}
 On the other hand, part \ref{dimensioni1} of Lemma \ref{dimensioni} gives
 $$\dim (\mC^\perp + \ma(U^\perp))=\dim (\mC^\perp)+m \cdot \dim
(U^\perp)-\dim  (\mC^\perp(U^\perp)).$$
As a consequence, 
  \begin{equation}\label{eq3}
|\mC^\perp + \ma(U^\perp)| = \frac{q^{km} \cdot
q^{m(k-u)}}{|\mC| \cdot
|\mC^\perp(U^\perp)|}.
 \end{equation}
 Combining equations (\ref{eq2}) and (\ref{eq3}) one obtains the
proposition.
\end{proof}

We will  also need the following preliminary lemma, which is
an explicit version of the M\"{o}bius inversion formula 
for the lattice of subspaces of $\F_q^k$.
We include a short proof for completeness.
See  \cite[Sections 3.7 -- 3.10]{stanley} for details.

\begin{lemma} \label{mobb}
 Let $\mP(\F_q^k)$ be the set of all $\F_q$-subspaces of $\F_q^k$, and let 
 $f:\mP(\F_q^k) \to \Z$ be any function. Define $g:\mP(\F_q^k) \to \Z$ by
 $g(V)=\sum_{U \subseteq V} f(U)$ for all $V \subseteq \F_q^k$. Then for all $i
\in \{0,...,k\}$ and for any subspace $V \in \mP(\F_q^k)$ with $\dim(V)=i$
we have 
 $$f(V)= \sum_{u=0}^i {(-1)}^{i-u} q^{\binom{i-u}{2}}  \sum_{\substack{U
\subseteq V \\ \dim(U)=u}} g(U).$$
\end{lemma}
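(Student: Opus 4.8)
The plan is to verify the formula directly: substitute the definition of $g$ into the claimed expression for $f(V)$ and check that the resulting double sum collapses. Write $h(V)$ for the right-hand side, fix $V \in \mP(\F_q^k)$ with $\dim(V)=i$, and replace each $g(U)$ (for $U\subseteq V$) by $\sum_{W\subseteq U}f(W)$. Interchanging the order of summation (a finite Fubini over subspaces) and grouping the terms according to the innermost subspace $W$, the coefficient of $f(W)$ in $h(V)$, for a fixed $W\subseteq V$ with $\dim(W)=w$, is
$$\sum_{u=w}^{i} (-1)^{i-u}\, q^{\binom{i-u}{2}} \cdot \left|\{U \ | \ W\subseteq U\subseteq V,\ \dim(U)=u\}\right|.$$
Since the map $U\mapsto U/W$ identifies the interval between $W$ and $V$ in the subspace lattice with the lattice of subspaces of $V/W\cong\F_q^{\,i-w}$, the cardinality above equals $\qbin{i-w}{u-w}$. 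Putting $j=i-w$ and reindexing, first with $\ell=u-w$ and then with $r=j-\ell$, and using the symmetry $\qbin{j}{j-r}=\qbin{j}{r}$, the coefficient of $f(W)$ becomes $\sum_{r=0}^{j}(-1)^r q^{\binom{r}{2}}\qbin{j}{r}$.

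It therefore suffices to establish the $q$-analogue of the alternating binomial identity, namely $\sum_{r=0}^{j}(-1)^r q^{\binom{r}{2}}\qbin{j}{r}=\delta_{j,0}$. Granting this, the only $W\subseteq V$ contributing to $h(V)$ is $W=V$, corresponding to $j=0$, with coefficient $1$; hence $h(V)=f(V)$, as desired. To prove the identity I would appeal to the $q$-binomial theorem $\prod_{s=0}^{j-1}(1+q^s x)=\sum_{r=0}^{j}q^{\binom{r}{2}}\qbin{j}{r}x^r$ and specialize to $x=-1$: the product is the empty product $1$ when $j=0$, and it vanishes when $j\ge 1$ because of the factor $1-q^0$. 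Alternatively one can run a one-line induction on $j$ using the $q$-Pascal recursion $\qbin{j}{r}=q^{r}\qbin{j-1}{r}+\qbin{j-1}{r-1}$ together with $\binom{r}{2}+r=\binom{r+1}{2}$: the sum then splits into two pieces that cancel after shifting the summation index in the second.

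The whole argument is bookkeeping; the only point that requires attention is keeping track of the powers of $q$ through the two reindexings, so that the coefficient of $f(W)$ comes out exactly as $\sum_{r}(-1)^r q^{\binom{r}{2}}\qbin{j}{r}$ and not some shifted variant. I do not expect any genuine obstacle. One could also sidestep the direct computation entirely by quoting the value $\mu(U,V)=(-1)^{\dim V-\dim U}q^{\binom{\dim V-\dim U}{2}}$ of the M\"obius function of the subspace lattice from \cite{stanley} and invoking the general M\"obius inversion theorem.
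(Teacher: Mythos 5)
Your argument is correct and is essentially the paper's own proof in a different packaging: after the Fubini interchange over the subspace lattice, both reduce to the identity $\sum_{r=0}^{j}(-1)^r q^{\binom{r}{2}}\qbin{j}{r}=\delta_{j,0}$, established via the $q$-binomial theorem at $x=-1$. The paper phrases this as an inductive computation of the M\"obius function $\mu$ defined by the recursion $\mu(U)=-\sum_{U\subsetneq S\subseteq V}\mu(S)$, which is equivalent to the vanishing identity you verify directly.
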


\begin{proof}
Fix an integer $i \in \{0,...,k\}$ and a vector space $V \in \mP(\F_q^k)$ with
$\dim(V)=i$. We inductively define a function $\mu:\{U \in \mP(\F_q^k) \ | \ 
U\subseteq V\} \to
\Z$  by 
$\mu(U)=1$ if $U=V$, and 
$\mu(U)=-\sum_{U \subsetneq S \subseteq V} \mu(S)$ if $U \subsetneq V$.
By definition of $g$ we have 

$$ \sum_{U\subseteq V} \mu(U) g(U) = \sum_{U\subseteq V} \mu(U) \sum_{S
\subseteq U} f(S) = \sum_{S \subseteq V} f(S) \sum_{S \subseteq U \subseteq V}
\mu(U)=f(V),$$
where the last equality immediately follows from the definition of $\mu$.
Therefore it suffices to show that for all $U \subseteq V$ we have 
\begin{equation}\label{fmob}
 \mu(U)= {(-1)}^{i-u} q^{\binom{i-j}{2}},
\end{equation}
where $u=\dim(U)$. We proceed by induction on $i-u$.  If $i=u$
then equation (\ref{fmob}) is trivial. Now assume $i>u$. By definition
of $\mu$ and the induction hypothesis we have 
\allowdisplaybreaks
\begin{eqnarray*}
 \mu(U) \ = \ - \sum_{U \subsetneq S \subseteq V}
\mu(S) &=&  -\sum_{s=u+1}^i {(-1)}^{i-s} q^{\binom{i-s}{2}} \qbin{i-j}{s-u} \\
&=&  -\sum_{s=u+1}^i {(-1)}^{i-s} q^{\binom{i-s}{2}} \qbin{i-u}{i-s} \\
&=&  -\sum_{s=0}^{i-u} {(-1)}^{s} q^{\binom{s}{2}} \qbin{i-u}{s} +
{(-1)}^{i-u} q^{\binom{i-u}{2}} \\
&=& {(-1)}^{i-u} q^{\binom{i-u}{2}},
\end{eqnarray*}
where the last equality follows from the $q$-Binomial Theorem (see 
\cite{stanley}, page 74).
\end{proof}

We can now prove the main result of this section,  
first established by Delsarte in \cite[Theorem~3.3]{del1}.
A proof for the special case of 
$\F_{q^m}$-linear vector rank-metric codes using different techniques
can be found in \cite{gadu}.

\begin{theorem}[MacWilliams identities for the rank metric] \label{mwid}
 Let $\mC \subseteq \ma$ be an linear rank-metric code. For all 
 $i \in \{0,...,k\}$ we have
 $$W_i(\mC^\perp) = \frac{1}{|\mC|} 
 \sum_{j=0}^k W_j(\mC) \sum_{u=0}^k {(-1)}^{i-u} q^{mu+
 \binom{i-u}{2}} \qbin{k-u}{k-i}\qbin{k-j}{u}.$$
\end{theorem}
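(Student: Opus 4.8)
The plan is to count, in two different ways, the number of pairs $(M, U)$ where $M \in \mC$ is a codeword and $U \subseteq \F_q^k$ is a subspace of fixed dimension containing $\cs(M)$, and then invert. Concretely, for each subspace $U$ with $\dim U = u$, the quantity $|\mC(U)|$ records how many codewords have columnspace inside $U$; summing $|\mC(U)|$ over all $u$-dimensional $U$ weights each codeword of rank $j$ by the number of $u$-spaces containing its $j$-dimensional columnspace, which is $\qbin{k-j}{u-j}$. This gives the clean identity
$$\sum_{\substack{U \subseteq \F_q^k \\ \dim U = u}} |\mC(U)| = \sum_{j=0}^k W_j(\mC) \qbin{k-j}{u-j},$$
and the same identity for $\mC^\perp$. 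The strategy is then to push this relation through Proposition~\ref{tecn}, which links $|\mC^\perp(U^\perp)|$ to $|\mC(U)|$, and finally to extract $W_i(\mC^\perp)$ by Möbius inversion on the subspace lattice (Lemma~\ref{mobb}).

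First I would define $g(V) = |\mC^\perp(V)|$ for each subspace $V$, and observe that $g(V) = \sum_{U \subseteq V} f(U)$ where $f(U) = |\{N \in \mC^\perp : \cs(N) = U\}|$; note $\sum_{\dim U = i} f(U) = W_i(\mC^\perp)$. Applying Lemma~\ref{mobb} to a fixed $i$-dimensional $V$ and summing over all such $V$ yields
$$W_i(\mC^\perp) = \sum_{u=0}^i (-1)^{i-u} q^{\binom{i-u}{2}} \sum_{\substack{V, \dim V = i \\ U \subseteq V, \dim U = u}} |\mC^\perp(U)|.$$
For a fixed $u$-dimensional $U$, the number of $i$-dimensional $V$ with $U \subseteq V$ is $\qbin{k-u}{i-u} = \qbin{k-u}{k-i}$, so the double sum collapses to $\qbin{k-u}{k-i} \sum_{\dim U = u} |\mC^\perp(U)|$. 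Next I would substitute $|\mC^\perp(U)|$: writing $U = W^\perp$ as $W$ ranges over all $(k-u)$-dimensional subspaces, Proposition~\ref{tecn} gives $|\mC^\perp(W^\perp)| = (|\mC^\perp|/q^{mu}) |\mC(W)|$ — here one checks the exponent: with $\dim W = k-u$, the proposition's factor $q^{m(k - \dim W)} = q^{mu}$, and $|\mC^\perp| = q^{km}/|\mC|$. Hence $\sum_{\dim U = u} |\mC^\perp(U)| = (q^{km}/(|\mC| q^{mu})) \sum_{\dim W = k-u} |\mC(W)|$, and by the counting identity above this equals $(q^{m(k-u)}/|\mC|) \sum_{j} W_j(\mC) \qbin{k-j}{k-u-j}$.

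Assembling the pieces, I would write
$$W_i(\mC^\perp) = \frac{1}{|\mC|} \sum_{u=0}^i (-1)^{i-u} q^{m(k-u) + \binom{i-u}{2}} \qbin{k-u}{k-i} \sum_{j=0}^k W_j(\mC) \qbin{k-j}{k-u-j},$$
then reindex the outer sum by $u \mapsto k-u$ (equivalently extend the $u$-range to $\{0,\dots,k\}$, since the binomial coefficients vanish outside the relevant range) and swap the order of summation to match the stated form, using $\qbin{k-j}{k-u-j} = \qbin{k-j}{u}$ after the substitution. The bookkeeping of which $q$-binomials vanish (so that the sum over $u$ can run to $k$ and the inner sum over $j$ is unrestricted) and the exponent arithmetic are the only delicate points; the conceptual content is entirely in the two-way count plus Proposition~\ref{tecn} plus Lemma~\ref{mobb}. \textbf{The main obstacle} I anticipate is purely organizational: keeping the reindexing $u \leftrightarrow k-u$ consistent throughout so that $q^{mu}$ in Proposition~\ref{tecn} correctly becomes the $q^{mu}$ appearing in the theorem (it enters via $q^{m(k-u)}$ from the proposition applied to $W^\perp$, which after the final substitution is $q^{mu'}$), and verifying that no terms are lost when the summation ranges are standardized to $\{0,\dots,k\}$.
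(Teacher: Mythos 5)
Your overall strategy is exactly the paper's: M\"obius inversion (Lemma \ref{mobb}) applied to $f(V)=|\{N\in\mC^\perp \ | \ \cs(N)=V\}|$, the duality relation of Proposition \ref{tecn}, and the double count $\sum_{\dim U=u'}|\mC(U)|=\sum_j W_j(\mC)\qbin{k-j}{u'-j}$. However, there is a concrete error in your application of Proposition \ref{tecn}, and the reindexing you propose at the end to compensate for it does not work. The proposition, applied to the code $\mC$ and the subspace $W$ with $\dim(W)=k-u$, reads $|\mC(W)|=\frac{|\mC|}{q^{m(k-\dim W)}}|\mC^\perp(W^\perp)|=\frac{|\mC|}{q^{mu}}|\mC^\perp(W^\perp)|$, hence $|\mC^\perp(W^\perp)|=\frac{q^{mu}}{|\mC|}\,|\mC(W)|$: the exponent is $mu$, not $m(k-u)$. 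Your factor $|\mC^\perp|/q^{mu}=q^{m(k-u)}/|\mC|$ arises from putting $\dim(W)$ where the proposition requires $\dim(W^\perp)$, i.e., from mixing the two symmetric ways of invoking the proposition.

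With the correct factor $q^{mu}/|\mC|$ your assembled identity becomes
$$W_i(\mC^\perp)=\frac{1}{|\mC|}\sum_{u=0}^i (-1)^{i-u} q^{mu+\binom{i-u}{2}}\qbin{k-u}{k-i}\sum_{j=0}^k W_j(\mC)\qbin{k-j}{k-u-j},$$
which is already the theorem: $\qbin{k-j}{k-u-j}=\qbin{k-j}{u}$ by the symmetry of Gaussian binomials (no change of variable is involved), and the range of $u$ extends to $k$ for free because $\qbin{k-u}{k-i}=0$ for $u>i$. By contrast, the substitution $u\mapsto k-u$ that you propose would simultaneously transform $(-1)^{i-u}$, $\binom{i-u}{2}$, $\qbin{k-u}{k-i}$ and $\qbin{k-j}{k-u-j}$ into expressions that do not match the statement, so it cannot repair the exponent; note also that ``reindex by $u\mapsto k-u$'' and ``extend the range of $u$ to $\{0,\dots,k\}$'' are different operations, not equivalent ones. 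Once the exponent is corrected and the spurious reindexing is dropped, your argument coincides with the paper's proof step for step.
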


\begin{proof}
For all subspaces $V \subseteq \F_q^k$ define 
$$f(V)= |\{ M \in \mC^\perp \ | \ \mbox{colsp}(M) = V \}|, \ \ \ \ \ \ 
g(V)= \sum_{U \subseteq V} f(U) = |\mC^\perp(V)|.$$
By Lemma \ref{mobb}, for any $i \in \{0,...,k\}$ and for any vector space $V \subseteq \F_q^k$ 
of dimension $i$ we have 
\allowdisplaybreaks
\begin{eqnarray*}
f(V) &=& \sum_{u=0}^i {(-1)}^{i-u} q^{\binom{i-u}{2}}  \sum_{\substack{U
\subseteq V \\ \dim(U)=u}} |\mC^\perp(U)| \\
&=& \sum_{u=0}^i {(-1)}^{i-u} q^{\binom{i-u}{2}}
\sum_{\substack{T
\subseteq \F_q^k \\ T \supseteq V^\perp \\ \dim(T)=k-u}}
|\mC^\perp(T^\perp)| \\
&=& \frac{1}{|\mC|} \sum_{u=0}^i {(-1)}^{i-u} q^{mu+\binom{i-u}{2}}
\sum_{\substack{T
\subseteq \F_q^k \\ T \supseteq V^\perp \\ \dim(T)=k-u}}
|\mC(T)|,
\end{eqnarray*}
where the last equality follows from Proposition 
\ref{tecn}. Now observe that
\allowdisplaybreaks
\begin{eqnarray} 
W_i(\mC^\perp) &=& \sum_{\substack{V
\subseteq \F_q^k \\ \dim(V)=i}} f(V) \nonumber \\ &=& 
\frac{1}{|\mC|} \ \sum_{u=0}^i  {(-1)}^{i-u} q^{mu+\binom{i-u}{2}}  
  \sum_{\substack{V
\subseteq \F_q^k \\ \dim(V)=i}} \sum_{\substack{T
\subseteq \F_q^k \\ T \supseteq V^\perp \\ \dim(T)=k-u}}
|\mC(T)| \nonumber \\ &=&
\frac{1}{|\mC|} \ \sum_{u=0}^i  {(-1)}^{i-u} q^{mu+\binom{i-u}{2}}
\sum_{\substack{T
\subseteq \F_q^k \\ \dim(T)=k-u}} 
\sum_{\substack{V
\subseteq \F_q^k \\ V \supseteq T^\perp \\ \dim(V)=i}} |\mC(T)|  \nonumber  \\
&=& \frac{1}{|\mC|} \ \sum_{u=0}^i  {(-1)}^{i-u} q^{mu+\binom{i-u}{2}}
 \qbin{k-u}{i-u}  \sum_{\substack{T
\subseteq \F_q^k \\ \dim(T)=k-u}}  |\mC(T)|. \label{mw1}
\end{eqnarray}
On the other hand,
\allowdisplaybreaks
\begin{eqnarray}
\sum_{\substack{T
\subseteq \F_q^k \\ \dim(T)=k-u}}  |\mC(T)| &=& 
\sum_{\substack{T
\subseteq \F_q^k \\ \dim(T)=k-u}} \sum_{j=0}^{k-u}
\sum_{\substack{S
\subseteq T \\ \dim(S)=j}} |\{M \in \mC \ | \ \mbox{colsp}(M) = S\}| 
\nonumber \\
&=& \sum_{j=0}^{k-u} \sum_{\substack{S
\subseteq \F_q^k \\ \dim(S)=j}} \sum_{\substack{T
\subseteq \F_q^k \\ T \supseteq S \\ \dim(T)=k-u}} 
 |\{M \in \mC \ | \ \mbox{colsp}(M) = S\}|  \nonumber \\
 &=& \sum_{j=0}^{k-u} \qbin{k-j}{u} W_j(\mC). \label{mw2}
\end{eqnarray}
Combining equations (\ref{mw1}) and (\ref{mw2}) one obtains the desired result.
\end{proof}

\begin{example}
Let $q=2$, $k=2$, $m=3$. Let $\mC \subseteq \ma$ be the $2$-dimensional linear code 
generated over $\F_5 \cong \Z/5\Z$ by the matrices
$$\begin{bmatrix}
1 & 0 & 2 \\ 0 & 2 & 4 \end{bmatrix}, \ \ \ \ \begin{bmatrix}
2 & 3 & 0 \\ 1 & 4 & 0
\end{bmatrix}.$$
We have $W_0(\mC)=1$, $W_1(\mC)=8$ and $W_2(\mC)=16$. Applying Theorem \ref{mwid} one can easily compute
$W_0(\mC^\perp)=1$, $W_1(\mC^\perp)=65$ and $W_2(\mC)=560$. Observe that 
$\mC^\perp$ has dimension $6-2=4$, and that 
$1+64+560=625=5^4$, as expected.
\end{example}

We now present a different formulation 
of the MacWilliams identities for the rank metric.
The following result is \cite[Theorem 31]{albrank}.

\begin{theorem} \label{theo}
 Let $\mathcal{C} \subseteq \ma$ be a linear code. 
 For all $0 \le \nu \le k$ we have 
$$\sum_{i=0}^{k-\nu} W_i(\mC) \qbin{k-i}{\nu} \ = \ \ 
\frac{|\mC|}{q^{m\nu}} \; \sum_{j=0}^{\nu} W_j(\mC^\perp) \qbin{k-j}{\nu-j}.$$
\end{theorem}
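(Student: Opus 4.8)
The plan is to relate both sides of the claimed identity to the quantity $\sum_{\dim(T)=k-\nu} |\mC(T)|$, which already appeared as the inner sum in equation (\ref{mw2}) of the previous proof. Reading equation (\ref{mw2}) with $u$ replaced by $\nu$ gives directly
$$\sum_{\substack{T \subseteq \F_q^k \\ \dim(T)=k-\nu}} |\mC(T)| \ = \ \sum_{j=0}^{k-\nu} \qbin{k-j}{\nu} W_j(\mC).$$
This is precisely the left-hand side of the statement (after renaming the summation index $j$ to $i$), so the real content is to show that the same sum equals $\frac{|\mC|}{q^{m\nu}} \sum_{j=0}^{\nu} W_j(\mC^\perp) \qbin{k-j}{\nu-j}$.

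For the right-hand side, I would start from $|\mC(T)| = \frac{|\mC|}{q^{m\nu}} |\mC^\perp(T^\perp)|$ for each subspace $T$ of dimension $k-\nu$, which is Proposition \ref{tecn} with $u = k-\nu$ (so that $k-u = \nu$ and $q^{m(k-u)} = q^{m\nu}$). Summing over all such $T$, and using that $T \mapsto T^\perp$ is a bijection between the $(k-\nu)$-dimensional subspaces and the $\nu$-dimensional subspaces of $\F_q^k$, we get
$$\sum_{\substack{T \subseteq \F_q^k \\ \dim(T)=k-\nu}} |\mC(T)| \ = \ \frac{|\mC|}{q^{m\nu}} \sum_{\substack{W \subseteq \F_q^k \\ \dim(W)=\nu}} |\mC^\perp(W)|.$$
Then I expand $|\mC^\perp(W)| = \sum_{j=0}^{\nu} \sum_{\substack{S \subseteq W \\ \dim(S)=j}} |\{M \in \mC^\perp : \cs(M)=S\}|$ and swap the order of summation exactly as in the derivation of (\ref{mw2}): for each fixed $j$-dimensional $S$, the number of $\nu$-dimensional $W$ with $S \subseteq W \subseteq \F_q^k$ is $\qbin{k-j}{\nu-j}$. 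This yields $\sum_{\dim(W)=\nu} |\mC^\perp(W)| = \sum_{j=0}^{\nu} \qbin{k-j}{\nu-j} W_j(\mC^\perp)$, and combining the two displayed equalities gives the right-hand side.

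Equating the two expressions for $\sum_{\dim(T)=k-\nu} |\mC(T)|$ completes the proof. There is no genuine obstacle here: the argument is a direct repackaging of Lemma \ref{dimensioni}, Proposition \ref{tecn}, and the double-counting identity (\ref{mw2}) that were already assembled for Theorem \ref{mwid}; the only point requiring minor care is bookkeeping the dimensions when applying Proposition \ref{tecn} (using $u = k-\nu$) and the subspace-counting coefficient $\qbin{k-j}{\nu-j}$, which counts $\nu$-dimensional subspaces of $\F_q^k$ containing a fixed $j$-dimensional subspace.
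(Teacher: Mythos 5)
Your proposal is correct and follows essentially the same route as the paper's own proof: both apply Proposition \ref{tecn} summed over all $(k-\nu)$-dimensional subspaces (using the bijection $T \mapsto T^\perp$) and then evaluate each side by the same double-counting of matrices against the subspaces containing their columnspace, exactly as in equation (\ref{mw2}). The only cosmetic difference is that the paper counts pairs $(U,M)$ directly while you decompose $|\mC(T)|$ by columnspace first; these are the same computation.
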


\begin{proof}
Proposition \ref{tecn} gives
\begin{equation} \label{mmww1}
\sum_{\substack{U \subseteq \F_q^k \\ \dim(U)=k-\nu}} |\mC(U)| \ \ \ = 
\ \ \  \frac{|\mC|}{q^{m\nu}}  \sum_{\substack{U \subseteq \F_q^k \\ \dim(U)=\nu}} |\mC^\perp(U)|.
\end{equation} 
Observe that
\allowdisplaybreaks
\begin{eqnarray}
\sum_{\substack{U \subseteq \F_q^k \\ \dim(U)=k-\nu}} |\mC(U)| &=&
|\{(U,M) \ | \ U \subseteq \F_q^k, \ \dim(U)=k-\nu, \ M \in \mC, \ \mbox{colsp}(M) \subseteq U \}| 
\nonumber \\
&=& \sum_{M \in \mC} |\{ U \subseteq \F_q^k, \ \dim(U)=k-\nu, \
\mbox{colsp}(M)
\subseteq U \}| \nonumber  \\
&=&  \sum_{i=0}^k \sum_{\substack{M \in \mC \\ \mbox{\small{rk}}(M)=i}} |\{ U
\subseteq \F_q^k, \ \dim(U)=k-\nu, \ \mbox{colsp}(M)
\subseteq U \}| \nonumber \\
&=&  \sum_{i=0}^k \sum_{\substack{M \in \mC \\ \mbox{\small{rk}}(M)=i}}
\qbin{k-i}{k-\nu-i} \; = \; \sum_{i=0}^{k-\nu} W_i(\mC)
\qbin{k-i}{\nu}. \label{11}
\end{eqnarray}
Using the same argument with $\mC^\perp$ and $k-\nu$ one shows that
\begin{equation}
\sum_{\substack{U \subseteq \F_q^k \\ \dim(U)=\nu}} |\mC^\perp(U)|  \ = \ \sum_{j=0}^{k-\nu} W_j(\mC^\perp)
\qbin{k-j}{\nu-j}. \label{22}
\end{equation}
The result now follows combining equations (\ref{mmww1}), (\ref{11}) and (\ref{22}).
\end{proof}

\begin{remark}
The two formulations of the MacWilliams identities for the rank metric
given in Theorems \ref{mwid} and \ref{theo} are equivalent.
See \cite[Corollary 1 and Proposition 3]{gadu} and \cite[Theorem 61]{albrank}
for details. 
\end{remark}

The next theorem is \cite[Theorem 27]{paperdistr}, and shows that the weight distribution
of a linear code is determined by its parameters, together with 
the number of codewords of small weight. 
We state it without proof.
An application of this result will be given
in  Section \ref{sec:MRD} (see Corollary~\ref{coroqmrd}).

\begin{theorem}\label{04-02-15}
Let $\mC \subseteq \ma$ be a linear code with $1 \le \dim(\mC) \le km-1$, minimum distance 
$d=d(\mC)$, and dual minimum distance $d^\perp=d(\mC^\perp)$.
Let $\varepsilon=1$ if $\mC$ is MRD, and $\varepsilon=0$ otherwise.
For all $1 \le i \le d^\perp$ we have
\begin{eqnarray*}
W_{k-d^\bot+i}(\mC) &=& (-1)^{i}q^{i \choose 2}
\sum\limits_{u=d^\bot}^{k-d} \qbin{u}{d^\bot-i}  \qbin{u-d^\bot+i-1}{i-1}
W_{k-u}(\mC) \\
&+& \qbin{k}{d^\bot-i}\sum\limits_{u=0}^{i-1-\varepsilon} {(-1)}^uq^{u \choose 2
} \qbin{k-d^\bot+i}{u}    \left(q^{\dim(\mC)-m(d^\perp-i+u)}-1\right).
\end{eqnarray*}
In particular, $k$, $m$, $t$, $d$, $d^\perp$ and $W_d(\mC),\ldots, W_{k-d^\perp}(\mC)$
completely determine the weight distribution of $\mC$.
\end{theorem}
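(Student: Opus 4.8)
The plan is to convert the MacWilliams identities into a triangular linear system for the large weights of $\mC$ and then invert that system by a $q$-analogue of binomial inversion. Concretely, I would first specialize Theorem~\ref{theo} to $\nu=r$ with $0\le r\le d^\perp-1$. Since $r<d^\perp=d(\mC^\perp)$, on the dual side $W_j(\mC^\perp)=0$ for $1\le j\le r$, so only $j=0$ survives and the right-hand side collapses to $q^{\dim(\mC)-mr}\qbin{k}{r}$. On the primal side I would substitute $u=k-i$ and use $W_0(\mC)=1$ together with $W_1(\mC)=\cdots=W_{d-1}(\mC)=0$. This yields, for every $r\in\{0,\dots,d^\perp-1\}$,
$$\sum_{u=r}^{k-d}W_{k-u}(\mC)\qbin{u}{r}=\left(q^{\dim(\mC)-mr}-1\right)\qbin{k}{r}=:S_r$$
(with the left sum read as empty when $r>k-d$).

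Next I would treat these $d^\perp$ relations as a linear system whose unknowns are $W_{k-u}(\mC)$ for $0\le u\le d^\perp-1$ and whose data are the weights $W_d(\mC),\dots,W_{k-d^\perp}(\mC)$, i.e.\ the $W_{k-u}(\mC)$ with $d^\perp\le u\le k-d$. Writing $T_r=S_r-\sum_{u=d^\perp}^{k-d}W_{k-u}(\mC)\qbin{u}{r}$, the displayed relations become the unitriangular system $\sum_{u=r}^{d^\perp-1}W_{k-u}(\mC)\qbin{u}{r}=T_r$ (using $W_{d-1}(\mC)=0$ to absorb the boundary term when $d^\perp-1>k-d$). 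I would invert it by
$$W_{k-n}(\mC)=\sum_{r=n}^{d^\perp-1}(-1)^{r-n}q^{\binom{r-n}{2}}\qbin{r}{n}\,T_r,\qquad 0\le n\le d^\perp-1,$$
which rests on the orthogonality $\sum_{r=n}^{u}(-1)^{r-n}q^{\binom{r-n}{2}}\qbin{u}{r}\qbin{r}{n}=\delta_{u,n}$; this in turn reduces, via $\qbin{u}{r}\qbin{r}{n}=\qbin{u}{n}\qbin{u-n}{r-n}$, to the $q$-Binomial Theorem evaluated at $-1$, as already used in Lemma~\ref{mobb}.

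Setting $n=d^\perp-i$ (so $i$ ranges over $1,\dots,d^\perp$), substituting the formula for $T_r$, and re-indexing the $r$-sum by $t=r-(d^\perp-i)\in\{0,\dots,i-1\}$, I would obtain
$$W_{k-d^\perp+i}(\mC)=-\sum_{u=d^\perp}^{k-d}W_{k-u}(\mC)\qbin{u}{d^\perp-i}\,\sigma_u+\qbin{k}{d^\perp-i}\sum_{t=0}^{i-1}(-1)^t q^{\binom{t}{2}}\qbin{k-d^\perp+i}{t}\left(q^{\dim(\mC)-m(d^\perp-i+t)}-1\right),$$
where $\sigma_u=\sum_{t=0}^{i-1}(-1)^t q^{\binom{t}{2}}\qbin{u-d^\perp+i}{t}$ and the identity $\qbin{a+t}{a}\qbin{N}{a+t}=\qbin{N}{a}\qbin{N-a}{t}$ has been used to extract $\qbin{u}{d^\perp-i}$ and $\qbin{k}{d^\perp-i}$. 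By the partial-sum identity $\sum_{t=0}^{T}(-1)^t q^{\binom{t}{2}}\qbin{n}{t}=(-1)^T q^{\binom{T+1}{2}}\qbin{n-1}{T}$ (a short induction on $T$ from the $q$-Pascal recursion), $\sigma_u=(-1)^{i-1}q^{\binom{i}{2}}\qbin{u-d^\perp+i-1}{i-1}$; absorbing the outer minus sign turns the first term into $(-1)^{i}q^{\binom{i}{2}}\sum_{u=d^\perp}^{k-d}\qbin{u}{d^\perp-i}\qbin{u-d^\perp+i-1}{i-1}W_{k-u}(\mC)$, which is exactly the first sum in the statement.

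Finally, to account for $\varepsilon$: if $\mC$ is MRD, then $\dim(\mC)=m(k-d+1)$ and $d^\perp=k-d+2$ (both proved in Section~\ref{sec:MRD}), so $\dim(\mC)=m(d^\perp-1)$ and the $t=i-1$ term of the second sum above carries the factor $q^{\dim(\mC)-m(d^\perp-1)}-1=0$; it drops out, so the sum may be truncated at $i-2=i-1-\varepsilon$. If $\mC$ is not MRD, then $\varepsilon=0$ and nothing changes. The ``in particular'' clause then follows, since the recursion expresses each $W_{k-d^\perp+i}(\mC)$, $1\le i\le d^\perp$, through $q^{\dim(\mC)}$, the integers $k,m,d,d^\perp$, and $W_d(\mC),\dots,W_{k-d^\perp}(\mC)$, while $W_0(\mC)=1$ and $W_1(\mC)=\cdots=W_{d-1}(\mC)=0$. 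I expect the real difficulty to be purely computational: the $q$-binomial bookkeeping in the last two paragraphs --- tracking the sign exponents, distinguishing $q^{\binom{t}{2}}$ from $q^{\binom{T+1}{2}}$, and repeatedly applying the trinomial-revision identity $\qbin{a+t}{a}\qbin{N}{a+t}=\qbin{N}{a}\qbin{N-a}{t}$ --- must be done carefully so that the two collapsed sums reproduce the stated formula on the nose; the only conceptual input from outside this section is the MRD characterization used in the last step.
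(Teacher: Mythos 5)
Your proposal is correct; note that the paper itself states this theorem without proof (it is imported from the reference on rank distributions), so there is nothing internal to compare against, but your argument is a complete and self-contained derivation from the paper's own Theorem~\ref{theo}. The route --- specializing the second MacWilliams formulation at $\nu=r<d^\perp$ so that the dual side collapses to the $j=0$ term, recognizing the resulting binomial moments as a unitriangular system in $W_{k-u}(\mC)$ for $0\le u\le d^\perp-1$, and inverting via the orthogonality $\sum_{r}(-1)^{r-n}q^{\binom{r-n}{2}}\qbin{u}{r}\qbin{r}{n}=\delta_{u,n}$ --- is exactly the binomial-moment argument used in the cited source. I checked the delicate points: the boundary case $d^\perp-1>k-d$ is handled correctly by $W_{d-1}(\mC)=0$ (and is consistent, since it forces $\mC$ MRD and hence $S_{d^\perp-1}=0$); the partial-sum identity $\sum_{t=0}^{T}(-1)^tq^{\binom{t}{2}}\qbin{n}{t}=(-1)^Tq^{\binom{T+1}{2}}\qbin{n-1}{T}$ follows from the $q$-Pascal rule $\qbin{n}{T+1}=\qbin{n-1}{T}+q^{T+1}\qbin{n-1}{T+1}$ as you say, and is applied with $n=u-d^\perp+i\ge i$ so that $\qbin{n-1}{i-1}$ is well defined; and the truncation to $i-1-\varepsilon$ in the MRD case is justified because $\dim(\mC)=m(d^\perp-1)$ kills the $t=i-1$ factor. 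The only cosmetic remark is that the ``$t$'' in the final clause of the statement is the paper's typo for $\dim(\mC)$, which your closing paragraph implicitly corrects.
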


We conclude this section showing how MacWilliams identities
for the rank metric can be employed to solve certain enumerative problems of matrices
over finite fields. The following result is \cite[Corollary 52]{alblattic}.

\begin{corollary}
Let $I \subseteq \{ (i,j)  \in \{1,...,k\} \times \{1,...,m\} \  | \ i=j \}$ be a set of diagonal
entries.
For all $0 \le r \le k$ the number of $k \times m$ matrices $M$ over $\F_q$
having rank $r$
and $M_{ij}=0$ for all $(i,j) \in I$ is
$$q^{-|I|} \ \sum_{t=0}^{k} 
\binom{|I|}{t} (q-1)^t
 \sum_{u=0}^k (-1)^{r-u} \
q^{mu+\binom{r-u}{2}}  \ 
\qbin{k-u}{k-r} \qbin{k-t}{u}.$$
\end{corollary}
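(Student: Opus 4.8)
The plan is to realize the matrices being counted as the rank-$r$ codewords of an explicit linear rank-metric code, and then read off their number from the MacWilliams identities of Theorem~\ref{mwid}. Concretely, I would let $\mC \subseteq \ma$ be the set of all matrices $M$ with $M_{ij}=0$ for every $(i,j) \in I$. This is obviously an $\F_q$-linear code, of dimension $km-|I|$, and the quantity we want is exactly $W_r(\mC)$. So it suffices to compute $W_r(\mC)$ from the weight distribution of $\mC^\perp$.

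Next I would identify $\mC^\perp$ and its weight distribution. Since $\mbox{Tr}(MN^t)=\sum_{i,j} M_{ij}N_{ij}$ is just the standard inner product under the identification $\ma \cong \F_q^{km}$, the trace-dual of $\mC$ is the code of matrices whose support is contained in $I$. Because $I$ consists of diagonal cells, such a matrix is diagonal on the rows and columns it touches, so its rank equals its number of nonzero entries; note also that $|I| \le k$ since $I$ indexes diagonal positions and $k \le m$. Hence $|\mC^\perp|=q^{|I|}$ and
$$W_j(\mC^\perp)=\binom{|I|}{j}(q-1)^j \quad \mbox{for } 0 \le j \le |I|,$$
with $W_j(\mC^\perp)=0$ for $j>|I|$ (equivalently for $j>k$, which is all we need).

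Finally, I would apply Theorem~\ref{mwid} to the code $\mC^\perp$, using $(\mC^\perp)^\perp=\mC$ (valid by non-degeneracy of the trace form). This gives
$$W_r(\mC)=\frac{1}{|\mC^\perp|}\sum_{j=0}^{k} W_j(\mC^\perp)\sum_{u=0}^{k}(-1)^{r-u}q^{mu+\binom{r-u}{2}}\qbin{k-u}{k-r}\qbin{k-j}{u},$$
and substituting $|\mC^\perp|=q^{|I|}$ together with $W_j(\mC^\perp)=\binom{|I|}{j}(q-1)^j$ (and renaming the index $j$ to $t$) yields the claimed formula verbatim. There is no serious obstacle here; the only point that needs a moment's justification is the weight distribution of $\mC^\perp$, i.e.\ that a matrix supported on a set of diagonal cells has rank equal to its number of nonzero entries, and that counting such matrices with $t$ nonzero entries gives $\binom{|I|}{t}(q-1)^t$. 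Everything else is a direct invocation of Theorem~\ref{mwid} and a substitution.
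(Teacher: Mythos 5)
Your proposal is correct and is essentially the paper's own proof: the paper simply names the small code (matrices supported on $I$) as $\mC$ from the outset, computes its weight distribution $W_t(\mC)=\binom{|I|}{t}(q-1)^t$, identifies its trace-dual as the matrices vanishing on $I$, and applies Theorem~\ref{mwid} exactly as you do. Your added remark that a matrix supported on diagonal cells has rank equal to its number of nonzero entries is the same (implicit) justification the paper relies on.
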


\begin{proof}
Define the linear code $\mC=\{M \in \ma \ | \ M_{ij}=0 \mbox{ for all } (i,j) \notin I\} \subseteq \ma$.
Then $\dim(\mC)=|I|$, $W_t(\mC)=0$ for $|I| < t \le k$, and 
$$W_t(\mC)= \binom{|I|}{t}{(q-1)}^t$$
for $0 \le t \le |I|$. Moreover, 
$\mC^\perp = \{M \in \ma \ | \ M_{ij}=0 \mbox{ for all } (i,j) \in I\}$.
Therefore the number of matrices $M \in \ma$ having rank $r$
and $M_{ij}=0$ for all $(i,j) \in I$ is
precisely $W_r(\mC^\perp)$. The corollary now follows from Theorem \ref{mwid}.
\end{proof}

\section{MRD codes} \label{sec:MRD}

In this section we study rank-metric codes that have the largest possible cardinality for their
parameters. We start with a Singleton-type bound for the cardinality of a rank-metric code
of given minimum distance. A code is called MRD if
it attains the bound. We then show that for any admissible choice of the parameters 
there exists a linear MRD code with those parameters.

In the second part of the section we study general structural properties of MRD codes.
We first prove in Theorem \ref{dualofis} that the dual of a linear MRD
code is MRD. Then we show in Theorem \ref{weight0} that the weight distribution of a possibly non-linear
MRD code $\mC \subseteq \ma$  with $0 \in \mC$ is determined by
$k$, $m$ and $d(\mC)$. As a corollary, we prove that these three
parameters completely determine the distance distribution of any MRD code. Our proofs are inspired by the lattice-theory
approach to the weight functions of coding theory proposed in \cite{tesialberto} and 
\cite{alblattic}.

\begin{theorem}[Singleton-like bound] \label{singbound}
Let $\mC \subseteq \ma$ be a rank-metric code with $|\mC| \ge 2$ and minimum distance 
$d$. Then 
$|\mC| \le q^{m(k-d+1)}$.
\end{theorem}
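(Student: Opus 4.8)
The plan is to construct an injective ``puncturing'' map from $\mC$ into a matrix space whose cardinality is exactly $q^{m(k-d+1)}$, and then just compare sizes. First I would record that the rank of any matrix in $\ma$ is at most $k$ (recall $k \le m$), so in particular $d = d(\mC) \le k$ and $k-d+1 \ge 1$; thus the claimed bound is meaningful. Then I would fix a set $J \subseteq \{1,\dots,k\}$ of $d-1$ row indices, say $J = \{k-d+2, \dots, k\}$, and let
$$\pi \colon \ma \longrightarrow \F_q^{(k-d+1)\times m}$$
be the $\F_q$-linear map that deletes from a matrix the $d-1$ rows indexed by $J$.

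Next I would show that $\pi$ restricted to $\mC$ is injective. Suppose $M, N \in \mC$ satisfy $\pi(M) = \pi(N)$. Then every row of $M - N$ outside $J$ vanishes, so $M-N$ has at most $d-1$ nonzero rows and hence $\rk(M-N) \le d-1$. If $M \ne N$, this contradicts the definition of $d(\mC) = d$; therefore $M = N$, and $\pi|_\mC$ is injective.

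It then follows immediately that $|\mC| = |\pi(\mC)| \le |\F_q^{(k-d+1)\times m}| = q^{m(k-d+1)}$, which is the assertion. I do not expect a genuine obstacle here: the only points that need a moment of care are the inequality $d \le k$ (so that the codomain of $\pi$ is nonempty) and the elementary fact that a matrix supported on at most $d-1$ rows has rank at most $d-1$. One could equally well delete $d-1$ \emph{columns} instead, landing in $\F_q^{k\times(m-d+1)}$, but since $k \le m$ one has $q^{k(m-d+1)} \ge q^{m(k-d+1)}$, so deleting rows gives the sharper — and in fact the correct — bound; it is worth noting this choice explicitly.
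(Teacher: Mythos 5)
Your proof is correct and follows essentially the same route as the paper: project onto $k-d+1$ of the rows, observe that the minimum distance forces this projection to be injective on $\mC$, and compare cardinalities. Your write-up merely makes explicit the details the paper leaves implicit (why $d\le k$, why a matrix supported on $d-1$ rows has rank at most $d-1$, and why deleting rows rather than columns gives the sharper bound under the convention $k\le m$).
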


\begin{proof}
Let $\pi: \mC \to \mat{(k-d+1)}{m}$ denote the projection on the last 
$k-d+1$ rows. Since $\mC$ has minimum distance $d$,
the map $\pi$ is injective. Therefore
$$|\mC| = |\pi(\mC)| \le q^{m(k-d+1)}.$$
\end{proof}

A code is MRD if its parameters attain the Singleton-like bound.

\begin{definition}
We say that $\mC \subseteq \ma$ is an \textbf{MRD}  code 
if $|\mC|=1$, or 
$|\mC| \ge 2$ and $|\mC|=q^{m(k-d+1)}$, where $d=d(\mC)$.
\end{definition}

We now prove that for any choice of $q$, $k$,
$m$ and $d$ there exists a linear rank-metric code 
$\mC \subseteq \ma$ that attains the bound of
Theorem \ref{singbound}. This result was first shown by Delsarte in \cite{del1}, and rediscovered 
independently by Gabidulin in 
\cite{gabid} and by K\"{o}tter and Kschischang in \cite{KK1} in the 
context of linear network coding.

\begin{theorem} \label{attaingabid}
For all $1 \le d \le k$ there exists an $\F_{q^m}$-linear
 vector rank-metric code 
 $C \subseteq \F_{q^m}^k$ with $\dg(C)=d$ 
 and $\dim_{\F_{q^m}}(C)=k-d+1$. In particular, there 
 exists a linear MRD code 
 $\mC \subseteq \ma$ with $d(\mC)=d$.
\end{theorem}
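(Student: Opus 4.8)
The plan is to exhibit an explicit Gabidulin-type construction. Fix distinct elements $g_1, \dots, g_k \in \F_{q^m}$ that are linearly independent over $\F_q$; this is possible precisely because $k \le m$. Consider the $\F_{q^m}$-linear code $C \subseteq \F_{q^m}^k$ generated by the rows of the $(k-d+1) \times k$ matrix whose $(\ell, j)$ entry is $g_j^{q^{\ell-1}}$, for $\ell = 1, \dots, k-d+1$ and $j = 1, \dots, k$. By construction $\dim_{\F_{q^m}}(C) \le k-d+1$, and to get equality it suffices to check that this generator matrix has $\F_{q^m}$-rank $k-d+1$, which follows once we know the minimum distance is $d$ (a codeword corresponding to a nonzero coefficient vector has rank at least $d$, hence is nonzero).

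The heart of the argument is the lower bound $\dg(C) \ge d$. A nonzero codeword has the form $v = (f(g_1), \dots, f(g_k))$ where $f(x) = \sum_{\ell=0}^{k-d} a_\ell x^{q^\ell}$ is a nonzero $q$-linearized polynomial. I would argue that $\rkg(v) = \dim_{\F_q}\langle f(g_1), \dots, f(g_k)\rangle$ equals $k$ minus the dimension of the kernel of the $\F_q$-linear map $f$ restricted to the $\F_q$-span of $g_1, \dots, g_k$ (which is a $k$-dimensional $\F_q$-space since the $g_j$ are independent). The key fact is that a nonzero $q$-linearized polynomial of $q$-degree $k-d$ has at most $q^{k-d}$ roots in any field, so the kernel of $f$ on this $k$-dimensional space has $\F_q$-dimension at most $k-d$. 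Hence $\rkg(v) \ge k - (k-d) = d$. To see $\dg(C) \le d$, one can either produce an explicit codeword of rank exactly $d$, or simply invoke the Singleton-like bound of Theorem~\ref{singbound}: since $|C| = q^{m(k-d+1)}$ and $\dg(C) \ge d$, combining with $|C| \le q^{m(k-\dg(C)+1)}$ forces $\dg(C) = d$.

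Finally, the ``in particular'' clause is immediate: applying Proposition~\ref{gabtodel} with any $\F_q$-basis $\Gamma$ of $\F_{q^m}$, the associated matrix code $\mC = \Gamma(C) \subseteq \ma$ is an $\F_q$-linear subspace with $|\mC| = |C| = q^{m(k-d+1)}$ and $d(\mC) = \dg(C) = d$, so $\mC$ attains the Singleton-like bound and is MRD. I expect the main obstacle to be the clean justification of the root-counting bound for linearized polynomials and its translation into the rank statement — that is, carefully relating $\rkg$ of the evaluation vector to the kernel dimension of the induced $\F_q$-linear map; the rest is bookkeeping.
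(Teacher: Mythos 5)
Your proposal is correct and follows essentially the same route as the paper: the code generated by the Moore matrix on $\F_q$-independent points $g_1,\dots,g_k$ is exactly the image of the evaluation map on $\mbox{Lin}_q(m,k-d)$ used in the paper, and both arguments obtain $\dg(C)\ge d$ via rank--nullity plus the root-count bound for linearized polynomials, then force equality with the Singleton-like bound and transfer to a matrix MRD code via Proposition~\ref{gabtodel}. The only cosmetic difference is that the paper explicitly routes the application of Theorem~\ref{singbound} through Proposition~\ref{gabtodel}, since the bound is stated for matrix codes.
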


We include an 
elegant proof for 
Theorem \ref{attaingabid} from
\cite{KK1}. 
Recall that a \textbf{linearized polynomial} $p$ over $\F_{q^m}$ is a polynomial of the
form
$$p(x)=\alpha_0x + \alpha_1 x^q+ \alpha_2 x^{q^2}+ \cdots +\alpha_{s}x^{q^s},
\ \ \ \ \ \alpha_i \in \F_{q^m}, \ \ i=0,...,s.$$
The \textbf{degree} of $p$, denoted by $\deg(p)$,  is the largest integer $i \ge
0$
such that $\alpha_i \neq 0$. The $\F_{q^m}$-vector space of linearized
polynomials
over $\F_{q^m}$ of degree at most $s$ is denoted by 
$\mbox{Lin}_q(m,s)$. It is easy to see that $\dim_{\F_{q^m}} (\mbox{Lin}_q(m,s))
= s+1$.

\begin{remark} \label{rootspol}
The roots
of a linearized polynomial $p$ over $\F_{q^m}$ form an $\F_q$-vector subspace of
$\F_{q^m}$ (see \cite{niede}, Theorem 3.50),
which we denote by
$V(p) \subseteq \F_{q^m}$ in the sequel. Clearly, for any
non-zero linearized 
polynomial $p$ we have
$\dim_{\F_q}V(p) \le \deg(p)$ by the Fundamental Theorem of Algebra.
\end{remark}

\begin{proof}[Proof of Theorem \ref{attaingabid}]
Let 
$E=\{\beta_1,...,\beta_k\} \subseteq \F_{q^m}$ be a set of $\F_q$-independent
elements. These elements exist
as $k \le m$ by assumption.
Define the $\F_{q^m}$-linear map 
$$\mbox{ev}_E: \mbox{Lin}_q(m,k-d) \to
\F_{q^m}^k, \ \ \ \ 
\mbox{ev}_E(p)=(p(\beta_1),...,p(\beta_k)) \ \mbox{for  $p \in
\mbox{Lin}_q(m,k-d)$}.$$
We claim that 
$C=\mbox{ev}_E(\mbox{Lin}_q(m,k-d)) \subseteq \F_{q^m}^k$ is a 
vector rank-metric code with the desired properties. 

Clearly, $C$ is 
$\F_{q^m}$-linear.
Now let $p \in \mbox{Lin}_q(m,k-d)$ be a non-zero linearized polynomial, and
let 
$W \subseteq \F_{q^m}$ denote the space generated over $\F_q$ by the evaluations
 $p(\beta_1),...,p(\beta_k)$.
The polynomial $p$ induces an $\F_q$-linear evaluation map 
$p:\langle \beta_1,...,\beta_k \rangle_{\F_q} \to \F_{q^m}$. The image of 
$p$ is $W$, and therefore by the rank-nullity theorem we have 
$\dim_{\F_q}(W)=k-\dim_{\F_q}V(p)$. By Remark \ref{rootspol}
we conclude $\dim_{\F_q}(W) \ge k-(k-d)=d$. This shows  that 
 $\dg(C) \ge d$. In particular, as $d \ge 1$, the map $\mbox{ev}_E$ is
injective, and  the dimension of 
$C$ is $\dim_{\F_{q^m}}(C)=k-d+1$.
Combining Proposition~\ref{gabtodel} and Theorem \ref{singbound}
we obtain $\dg(C)=d$.

The second part of the theorem immediately follows from 
Proposition \ref{gabtodel}.
\end{proof}

The MRD code construction in the 
proof of Theorem \ref{attaingabid} was later generalized by 
Sheekey in~\cite{sh06}, introducing a new class of MRD codes.

The reminder of the section is devoted to the structural properties of 
MRD codes.  We start with a preliminary result from \cite[Chapter 7]{tesialberto}.

\begin{lemma} \label{design}
Let $\mC \subseteq \ma$ be an MRD code with 
$|\mC| \ge 2$ and minimum distance $d$. For all subspaces 
$U \subseteq \F_q^k$ with $u=\dim(U) \ge d-1$ we have 
$$|\mC(U)| =q^{m(u-d+1)}.$$
\end{lemma}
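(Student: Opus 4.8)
The plan is to prove this by a double-counting argument combined with the Singleton-like bound applied to a suitable restricted code. The key observation is that $\mC(U)$, the set of codewords whose column space lies in $U$, can itself be viewed (after an invertible change of basis in $\F_q^k$) as a rank-metric code in $\mat{u}{m}$, and if $\mC$ is MRD then $\mC(U)$ cannot be too small because $\mC$ has many codewords spread across all column spaces. First I would fix an invertible $G \in \mat{k}{k}$ sending $U$ to the coordinate subspace $V = \{x \in \F_q^k : x_i = 0 \text{ for } i > u\}$, exactly as in the proof of Lemma~\ref{dimensioni}; the map $M \mapsto GM$ is a rank-preserving $\F_q$-isomorphism sending $\mC(U)$ bijectively onto a code sitting inside $\ma(V)$, so without loss of generality $U = V$ and $\mC(U)$ consists of matrices whose nonzero rows are among the first $u$.

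Next I would estimate $|\mC(U)|$ from below and from above. For the upper bound, the restricted code $\mC(U)$, viewed as a set of $u \times m$ matrices, still has minimum rank distance at least $d$ (ranks are unchanged), so Theorem~\ref{singbound} gives $|\mC(U)| \le q^{m(u-d+1)}$ — this is where the hypothesis $u \ge d-1$ matters, so that $u - d + 1 \ge 0$ and the bound is meaningful. For the lower bound, I would use the projection idea from the proof of Theorem~\ref{singbound} in reverse: consider the projection $\pi: \mC \to \mat{u}{m}$ onto the first $u$ rows. A codeword $M \in \mC$ lies in $\mC(U)$ precisely when $\pi(M)$ determines $M$ entirely, i.e. when the last $k-u$ rows of $M$ vanish; equivalently, $\mC(U)$ is the kernel of the projection onto the last $k-u$ rows restricted to $\mC$. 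Since $\mC$ is MRD, $|\mC| = q^{m(k-d+1)}$, and the projection onto the last $k-u$ rows lands in $\mat{(k-u)}{m}$, which has only $q^{m(k-u)}$ elements; hence the kernel $\mC(U)$ has size at least $|\mC| / q^{m(k-u)} = q^{m(k-d+1) - m(k-u)} = q^{m(u-d+1)}$.

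Combining the two inequalities forces $|\mC(U)| = q^{m(u-d+1)}$, which is the claim. The main obstacle — and the only point requiring genuine care — is the lower bound: one must verify that the relevant projection really does take values in a space of size $q^{m(k-u)}$ and that the fibre over $0$ is exactly $\mC(U)$ after the change of basis, so that the pigeonhole estimate $|\ker| \ge |\mC| / |\mathrm{image \ ambient}|$ applies cleanly. I would also double-check that no nondegeneracy issue arises from $\mC$ being merely a code with $0 \in \mC$ and not necessarily linear: the projection argument uses only that distinct codewords differing only in the first $u$ rows are impossible when $u \ge d-1$ would be too weak, so in fact for the lower bound I should instead argue via the partition of $\mC$ into cosets of $\mC(U)$ under the additive action, or simply note that the projection onto the last $k-u$ rows has image of size at most $q^{m(k-u)}$ and $\mC(U)$ is one of its fibres, all fibres being in bijection since $\mC$ is an additive group (being linear, as MRD codes of interest here are, or by the stated hypothesis one may assume linearity as in the surrounding text). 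Everything else is routine.
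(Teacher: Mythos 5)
Your change of basis and your upper bound are fine, but the lower bound contains a genuine gap that your own fallback does not repair. The pigeonhole estimate $|\mC(U)| \ge |\mC|/q^{m(k-u)}$ only shows that \emph{some} fibre of the projection $\rho$ onto the last $k-u$ rows has at least $q^{m(u-d+1)}$ elements; it says nothing about the fibre over $0$, which is the one equal to $\mC(U)$ after your change of basis. Your proposed repair --- treating the fibres as cosets of an additive subgroup --- requires $\mC$ to be linear, but the lemma is stated for arbitrary MRD codes with $|\mC|\ge 2$, and the paper genuinely needs the non-linear case: Lemma \ref{design} is invoked in Theorem \ref{weight0} for possibly non-linear MRD codes containing $0$, and through Corollary \ref{weight1} for the translates $\mC-N$, which contain $0$ but are not additive groups. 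So ``one may assume linearity as in the surrounding text'' is not available, and in the non-linear case your argument does not even establish that $\rho^{-1}(0)$ is non-empty (which is the whole content of the lemma when $u=d-1$).

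The paper closes this gap by a different device: it factors $\rho$ as $\pi_2\circ\pi_1$, where $\pi_1$ projects $\mD=G\mC$ onto the last $k-d+1$ rows and $\pi_2$ further projects onto the last $k-u$ rows. Injectivity of $\pi_1$ (from the minimum distance) together with $|\mD|=q^{m(k-d+1)}$ makes $\pi_1$ a \emph{bijection} onto all of $\mat{(k-d+1)}{m}$; since $\pi_2$ is a surjective linear map between full matrix spaces, every fibre of $\pi_2$, and hence every fibre of $\rho=\pi_2\circ\pi_1$, has exactly $q^{m(u-d+1)}$ elements --- in particular the fibre over $0$. Alternatively, you could salvage your own scheme without any linearity assumption by applying your Singleton upper bound to \emph{every} fibre of $\rho$ (two codewords in a common fibre differ by a matrix of rank at least $d$ supported on the first $u$ rows), which gives $|\mC| \le (\mbox{number of fibres})\cdot q^{m(u-d+1)} \le q^{m(k-u)}\cdot q^{m(u-d+1)} = q^{m(k-d+1)} = |\mC|$; equality must then hold throughout, forcing $\rho$ to be surjective and every fibre, including $\rho^{-1}(0)$, to have size exactly $q^{m(u-d+1)}$. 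Either way, the missing idea is an argument that \emph{all} fibres are equinumerous (or all extremal), not merely that at least one of them is large.
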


\begin{proof}
As in Lemma \ref{dimensioni}, define the space 
 $V=\{(x_1,...,x_k) \in \F_q^k \ | \ x_i =0 \mbox{ for 
$i > u$} \} \subseteq \F_q^k$. Let $g: \F_q^k \to \F_q^k$ be an 
$\F_q$-isomorphism with $f(U)=V$. Denote by 
$G \in \mat{k}{k}$ the matrix associated to $g$ with respect to
the canonical basis of $\F_q^k$. Define the rank-metric code $\mD=G\mC=\{GM \ | \ M \in \mC\}$.
Clearly, $\mD$ has the same dimension and minimum distance as 
$\mC$. In particular, it is MRD. Observe moreover that $\mC(U)=\mD(V)$.

Now consider the maps
$$\mD \stackrel{\pi_1}{\longrightarrow} \mat{(k-d+1)}{m}
\stackrel{\pi_2}{\longrightarrow} \mat{(k-u)}{m},$$
where $\pi_1$ is the projection on the 
last $k-d+1$ coordinates, and  
$\pi_2$ is the projection on the last 
$k-u$ coordinates. Since 
$d(\mD)=d$, $\pi_1$ is injective. Since
$\mD$ is MRD, we have $\log_q(|\mD|)= m(k-d+1)$. Therefore
$\pi_1$ is bijective. The map $\pi_2$ is $\F_q$-linear and surjective. Therefore 
$$|\pi_2^{-1}(0)| = |\pi_2^{-1}(M)| = q^{m(u-d+1)}
 \ \mbox{ for all $M \in \mat{(k-u)}{m}$}.$$
  Since 
 $\pi_1$ is bijective and $\pi_2$ is surjective, the map $\pi=\pi_2 \circ \pi_1$ is 
 surjective. Moreover,  
 $$|\pi^{-1}(0)| = |\pi^{-1}(M)| = q^{m(u-d+1)}
 \ \mbox{ for all $M \in \mat{(k-u)}{m}$}.$$
 The lemma now follows from the identity 
 $\mC(U)=\mD(V)=\pi^{-1}(0)$.
 \end{proof}
 
 We can now show that the dual of a linear MRD code is MRD.
 The next fundamental result is \cite[Theorem 5.5]{alblattic}.
 
 \begin{theorem} \label{dualofis}
 Let $\mC \subseteq \ma$ be a linear MRD code.
 Then $\mC^\perp$ is MRD.
 \end{theorem}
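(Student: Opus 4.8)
The plan is to handle the two trivial extremes first and then treat the main case. If $\mC = \ma$ then $\mC^\perp = 0$, which is MRD by definition; if $\mC = 0$ then $\mC^\perp = \ma$, which has minimum distance $1$ and cardinality $q^{km} = q^{m(k-1+1)}$, hence is MRD. So assume $1 \le \dim(\mC) \le km-1$, let $d = d(\mC)$, and set $d^\perp = d(\mC^\perp)$. Since $\mC$ is MRD, $\dim(\mC) = m(k-d+1)$, so $\dim(\mC^\perp) = km - m(k-d+1) = m(d-1)$. What I want to show is $d^\perp = k - \dim(\mC^\perp)/m + 1 = k - (d-1) + 1 = k-d+2$; once $d^\perp$ is known to equal this value, $\mC^\perp$ automatically attains the Singleton-like bound of Theorem \ref{singbound} (since $q^{m(k-d^\perp+1)} = q^{m(d-1)} = |\mC^\perp|$), i.e.\ $\mC^\perp$ is MRD.

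First I would show $d^\perp \ge k-d+2$, equivalently $W_i(\mC^\perp) = 0$ for all $1 \le i \le k-d+1$. The key tool is Lemma \ref{design}: for every subspace $U \subseteq \F_q^k$ with $\dim(U) = u \ge d-1$ we have $|\mC(U)| = q^{m(u-d+1)}$. Now fix $i$ with $1 \le i \le k-d+1$ and a subspace $V$ with $\dim(V) = i$; I want to count $f(V) := |\{M \in \mC^\perp : \cs(M) = V\}|$ and show it is $0$. Using the Möbius inversion of Lemma \ref{mobb} together with the duality identity of Proposition \ref{tecn} — exactly the computation carried out in the proof of Theorem \ref{mwid} — one gets
$$f(V) = \frac{1}{|\mC|} \sum_{u=0}^{i} (-1)^{i-u} q^{mu + \binom{i-u}{2}} \sum_{\substack{T \subseteq \F_q^k \\ T \supseteq V^\perp \\ \dim(T) = k-u}} |\mC(T)|.$$
For each such $T$ we have $\dim(T) = k - u \ge k - i \ge d-1$, so Lemma \ref{design} applies and $|\mC(T)| = q^{m(k-u-d+1)}$, a quantity independent of $T$; moreover the inner sum ranges over a set whose size depends only on $\dim(V^\perp) = k-i$ and $k-u$, namely $\qbin{i}{i-u}$ subspaces. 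Substituting $|\mC| = q^{m(k-d+1)}$, the expression for $f(V)$ collapses to $q^{\text{(something)}}$ times $\sum_{u=0}^{i} (-1)^{i-u} q^{\binom{i-u}{2}} \qbin{i}{i-u}$, which is $0$ by the $q$-Binomial Theorem (as used at the end of the proof of Lemma \ref{mobb}, it is the alternating $q$-binomial sum $\sum_{s=0}^{i}(-1)^s q^{\binom s2}\qbin is = 0$ for $i \ge 1$). Hence $f(V) = 0$ for every $V$ of dimension $i \le k-d+1$, so $W_i(\mC^\perp) = 0$ in that range and $d^\perp \ge k-d+2$.

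Finally, the reverse inequality $d^\perp \le k-d+2$ is immediate from the Singleton-like bound applied to $\mC^\perp$: if we had $d^\perp \ge k-d+3$ then $|\mC^\perp| \le q^{m(k-d^\perp+1)} \le q^{m(d-2)} < q^{m(d-1)} = |\mC^\perp|$, a contradiction (and $d^\perp$ is defined since $\dim(\mC^\perp) = m(d-1) \ge m \ge 1$, so $|\mC^\perp| \ge q^m \ge 2$). Therefore $d^\perp = k-d+2$ and $|\mC^\perp| = q^{m(d-1)} = q^{m(k-d^\perp+1)}$, so $\mC^\perp$ is MRD. The step I expect to be the main obstacle is the bookkeeping in the first inequality: correctly tracking the powers of $q$ and the binomial factor through the Möbius/duality computation so that everything factors out and leaves precisely the vanishing alternating $q$-binomial sum; the conceptual input (Lemma \ref{design} makes $|\mC(T)|$ depend only on $\dim T$) is simple, but one must be careful that $\dim T = k-u \ge d-1$ holds throughout the summation range, which is exactly why the argument only kills weights up to $k-d+1$.
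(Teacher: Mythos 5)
Your proof is correct. It uses the same two ingredients as the paper's proof --- Proposition \ref{tecn} and Lemma \ref{design} --- but deploys them differently for the key inequality $d^\perp \ge k-d+2$. The paper applies Proposition \ref{tecn} just once, to an arbitrary subspace $U$ of dimension $k-d+1$: since $\dim(U^\perp)=d-1$, Lemma \ref{design} gives $|\mC(U^\perp)|=1$, hence $|\mC^\perp(U)|=1$, and since every matrix of rank at most $k-d+1$ has column space contained in some such $U$, the bound follows with no further computation. You instead rerun the M\"obius-inversion computation from the proof of Theorem \ref{mwid} to show $f(V)=0$ for every $V$ with $1\le\dim(V)\le k-d+1$; your bookkeeping is right (the factor $q^{mu}\cdot q^{m(k-u-d+1)}$ cancels against $|\mC|=q^{m(k-d+1)}$, leaving exactly $\sum_{s=0}^{i}(-1)^s q^{\binom{s}{2}}\qbin{i}{s}=0$), but this is a heavier route to the same conclusion --- in effect you reprove the vanishing part of Theorem \ref{weight0} for the dual code, where a single well-chosen application of Proposition \ref{tecn} suffices. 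Your treatment of the reverse inequality via the Singleton-like bound on $\mC^\perp$ is equivalent to the paper's inequality $d+d^\perp\le k+2$. One small point to make explicit: in the main case $1\le\dim(\mC)\le km-1$ one has $d\ge 2$ (otherwise $\dim(\mC)=m(k-d+1)=km$), which is what justifies your assertion that $\dim(\mC^\perp)=m(d-1)\ge m\ge 1$ and hence that $d^\perp$ is defined.
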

 
 \begin{proof}
 The result is immediate if $\dim(\mC) \in \{0,km\}$. Assume $1 \le \dim(\mC) \le km-1$, and let 
 $d=d(\mC)$, $d^\perp=d(\mC^\perp)$. Applying Theorem \ref{singbound} to 
 $\mC$ and $\mC^\perp$ we obtain
 $$\dim(\mC)\le m(k-d+1), \ \ \ \ \ \ \dim(\mC^\perp)\le m(k-d^\perp+1).$$
 Therefore $km=\dim(\mC)+\dim(\mC^\perp) \le 2mk-m(d+d^\perp)+2m$, i.e.,
 \begin{equation} \label{eqqq}
 d+d^\perp \le k+2.
 \end{equation}
 Let $U \subseteq \F_q^k$ be any $\F_q$-subspace with
$\dim(U)=k-d+1$. 
By Proposition \ref{tecn} we have 
\begin{equation}\label{eqqq2}
|\mC^\perp(U)|= \frac{|\mC^\perp|}{q^{m(d-1)}} |\mC(U^\perp)|.
\end{equation}
Since $\dim(U^\perp)=d-1$,
by Lemma 
\ref{design} we have 
$|\mC(U^\perp)| = |\mC|/q^{m(k-d+1)}=1$,
where the last equality follows from the fact that $\mC$ is MRD.
Therefore (\ref{eqqq2}) becomes
$$|\mC^\perp(U)|= \frac{|\mC^\perp|}{q^{m(d-1)}}= 
\frac{q^{km}/q^{m(d-1)}}{q^{m(d-1)}}=1.$$
Since $U$ is arbitrary with $\dim(U)=k-d+1$, this shows $d^\perp \ge k-d+2$.
Using  (\ref{eqqq}) we conclude $d^\perp=k-d+2$.
The theorem now follows from 
$$\dim(\mC^\perp)=km-\dim(\mC)=km-m(k-d+1)=m(k-d^\perp+1).$$
\end{proof}

The proof of Theorem \ref{dualofis} also shows the following useful characterization of linear
MRD codes in terms of their minimum distance and dual minimum distance.

\begin{proposition} \label{carMRD}
Let $\mC \subseteq \ma$ be a linear code with 
$1 \le \dim(\mC) \le km-1$. The following are equivalent.
\begin{enumerate}
\item $\mC$ is MRD,
\item $\mC^\perp$ is MRD,
\item $d(\mC)+d(\mC^\perp)=k+2$.
\end{enumerate}
\end{proposition}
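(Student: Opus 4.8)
The plan is to derive all three equivalences as essentially free consequences of the proof of Theorem \ref{dualofis}, which already contains the heart of the argument. The key observation is that, aside from the trivial cases, the whole discussion reduces to the single inequality $d(\mC)+d(\mC^\perp)\le k+2$ from \eqref{eqqq} (valid for \emph{any} linear code with $1\le\dim(\mC)\le km-1$, not just MRD ones) together with the Lemma \ref{design} computation $|\mC(U^\perp)|=1$ whenever $\dim(U^\perp)=d(\mC)-1$, which is what forces $d^\perp\ge k-d+2$ and hence $d^\perp=k-d+2$.

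Concretely, I would argue as follows. First I would prove $(1)\Rightarrow(3)$: this is exactly the content of the last lines of the proof of Theorem \ref{dualofis}, where from $\mC$ MRD we used Proposition \ref{tecn} and Lemma \ref{design} to get $|\mC^\perp(U)|=1$ for all $U$ of dimension $k-d+1$, whence $d^\perp\ge k-d+2$, and then combined with \eqref{eqqq} to conclude $d^\perp=k-d+2$, i.e. $d(\mC)+d(\mC^\perp)=k+2$. Next, $(3)\Rightarrow(1)$: assuming $d+d^\perp=k+2$, I would apply Theorem \ref{singbound} to both $\mC$ and $\mC^\perp$ to get $\dim(\mC)\le m(k-d+1)$ and $\dim(\mC^\perp)\le m(k-d^\perp+1)=m(d-1)$; adding these and using $\dim(\mC)+\dim(\mC^\perp)=km$ gives $km\le m(k-d+1)+m(d-1)=mk$, so \emph{both} inequalities are equalities, in particular $\dim(\mC)=m(k-d+1)$, which says $\mC$ is MRD. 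This simultaneously shows $\dim(\mC^\perp)=m(d-1)=m(k-d^\perp+1)$, so $\mC^\perp$ is MRD as well, giving $(3)\Rightarrow(2)$ at the same time.

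To close the cycle I would then note that $(2)\Rightarrow(3)$ follows from $(1)\Rightarrow(3)$ applied to $\mC^\perp$ in place of $\mC$: since $(\mC^\perp)^\perp=\mC$ and $1\le\dim(\mC^\perp)\le km-1$, the hypothesis that $\mC^\perp$ is MRD yields $d(\mC^\perp)+d(\mC)=k+2$, which is (3). (Here I use that the scalar product $\mathrm{Tr}(MN^t)$ is non-degenerate, so $(\mC^\perp)^\perp=\mC$, as already recorded after the definition of the dual code.) With $(1)\Leftrightarrow(3)$ and $(2)\Leftrightarrow(3)$ established, all three statements are equivalent.

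I do not expect any genuine obstacle here, since everything needed has been extracted in the proof of Theorem \ref{dualofis}; the only mild subtlety is to be careful that the inequality \eqref{eqqq} was proved for arbitrary linear codes with nondegenerate dimension, so it is legitimately available in the direction $(3)\Rightarrow(1)$ where we do \emph{not} yet know $\mC$ is MRD — but in fact that direction does not even need \eqref{eqqq}, only the two Singleton bounds and the dimension identity $\dim(\mC)+\dim(\mC^\perp)=km$, which makes it cleanest to present. The write-up is therefore short: one paragraph handling the trivial cases $\dim(\mC)\in\{0,km\}$ (vacuous since they are excluded), then the chain $(1)\Rightarrow(3)\Rightarrow(1)\wedge(2)$ and $(2)\Rightarrow(3)$ as above.
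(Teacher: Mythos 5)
Your proposal is correct and follows essentially the same route as the paper, which proves this proposition simply by observing that the argument for Theorem \ref{dualofis} (the two Singleton bounds plus the dimension identity $\dim(\mC)+\dim(\mC^\perp)=km$, together with the computation forcing $d^\perp=k-d+2$ when $\mC$ is MRD) already yields all three equivalences. Your write-up is a careful and accurate expansion of exactly that observation, including the correct use of $(\mC^\perp)^\perp=\mC$ to get $(2)\Rightarrow(3)$.
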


In the remainder of the section we concentrate on the weight and distance distributions of
(possibly non-linear) MRD codes. We start with a result 
on the weight distribution of MRD codes containing the zero vector (see \cite[Theorem 7.46]{tesialberto}).
 
 \begin{theorem} \label{weight0}
 Let $\mC$ be an MRD code with $|\mC| \ge 2$ and $0 \in \mC$. 
 Let $d=d(\mC)$. Then $W_0(\mC)=1$, $W_i(\mC)=0$ for $1 \le i \le d-1$, and 
 $$W_i(\mC) =  \sum_{u=0}^{d-1} {(-1)}^{i-u} q^{\binom{i-u}{2}}
\qbin{k}{i}\qbin{i}{u} +  \sum_{u=d}^i {(-1)}^{i-u} q^{\binom{i-u}{2}+m(u-d+1)}
\qbin{k}{i}\qbin{i}{u}$$
for 
$d \le i \le k$. 
 \end{theorem}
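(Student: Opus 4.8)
The plan is to reduce the (possibly non-linear) statement to a counting problem over subspaces of $\F_q^k$, exactly as in the proofs of Theorems \ref{mwid} and \ref{theo}, but using Lemma \ref{design} in place of duality. First observe that the claims $W_0(\mC)=1$ and $W_i(\mC)=0$ for $1\le i\le d-1$ are immediate: since $0\in\mC$ and $d(\mC)=d$, no nonzero codeword has rank $<d$, and distinct codewords of rank $0$ would have distance $0$. So the content is the formula for $d\le i\le k$. For this I would set up, for each subspace $V\subseteq\F_q^k$, the functions
\[
f(V)=|\{M\in\mC \ | \ \cs(M)=V\}|, \qquad g(V)=\sum_{U\subseteq V} f(U)=|\mC(U)|\big|_{U=V}=|\mC(V)|,
\]
and apply the $q$-analogue Möbius inversion of Lemma \ref{mobb}. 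This gives, for any $V$ with $\dim(V)=i$,
\[
f(V)=\sum_{u=0}^i (-1)^{i-u} q^{\binom{i-u}{2}} \sum_{\substack{U\subseteq V\\ \dim(U)=u}} |\mC(U)|.
\]
Summing over all $i$-dimensional $V$ yields $W_i(\mC)=\sum_{\dim(V)=i} f(V)$, and after exchanging the order of summation (a subspace $U$ of dimension $u$ lies in $\qbin{k-u}{i-u}$ subspaces of dimension $i$, but it is cleaner to just push the $V$-sum inside and get the plain count of $u$-subspaces $\qbin{k}{u}$ weighted appropriately) one arrives at
\[
W_i(\mC)=\sum_{u=0}^i (-1)^{i-u} q^{\binom{i-u}{2}} \sum_{\substack{U\subseteq \F_q^k\\ \dim(U)=u}} |\mC(U)| \Big/ \text{(overcount factor)},
\]
which I will organize so that the inner sum becomes $\qbin{k}{u}$ times the common value of $|\mC(U)|$ — and this is exactly where Lemma \ref{design} enters.

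The key input is the value of $|\mC(U)|$ for a $u$-dimensional subspace $U$. For $u\ge d-1$, Lemma \ref{design} gives $|\mC(U)|=q^{m(u-d+1)}$; this is where MRD-ness is used in an essential way. For $u\le d-1$, however, $\mC(U)$ consists only of the zero matrix (any nonzero $M$ with $\cs(M)\subseteq U$ has rank $\le u\le d-1$, contradicting $d(\mC)=d$ together with $0\in\mC$), so $|\mC(U)|=1=q^{m(u-d+1)}$ exactly when $u=d-1$, and $|\mC(U)|=1$ for $u<d-1$ as well. So the clean statement is: $|\mC(U)|=q^{m(\dim U-d+1)}$ for $\dim U\ge d-1$, and $|\mC(U)|=1$ for $\dim U\le d-1$. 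Splitting the sum over $u$ at the threshold $u=d-1$ (noting $q^{m(u-d+1)}=1$ when $u=d-1$, so the two regimes agree there and I can cut at $u=d-1$ or $u=d$ as convenient) produces precisely the two sums in the statement: the range $0\le u\le d-1$ contributes $\sum_{u=0}^{d-1}(-1)^{i-u}q^{\binom{i-u}{2}}\qbin{k}{i}\qbin{i}{u}$ (using $|\mC(U)|=1$ and $\binom{k}{u}_q$ replaced by $\qbin{k}{i}\qbin{i}{u}$ after absorbing the overcount), and the range $d\le u\le i$ contributes $\sum_{u=d}^i(-1)^{i-u}q^{\binom{i-u}{2}+m(u-d+1)}\qbin{k}{i}\qbin{i}{u}$.

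The one genuine bookkeeping obstacle — and the step I would do most carefully — is the replacement of the naive count "$\sum_{\dim U=u}$ over all $U\subseteq\F_q^k$" by the form $\qbin{k}{i}\qbin{i}{u}$ appearing in the target: this comes from the identity $\sum_{\substack{V:\dim V=i}}\ \sum_{\substack{U\subseteq V:\dim U=u}}(\cdots) = \qbin{k}{i}\qbin{i}{u}\cdot(\text{common value})$ once $|\mC(U)|$ depends only on $\dim U$, since $V$ has $\qbin{i}{u}$ subspaces of dimension $u$ and there are $\qbin{k}{i}$ choices of $V$. So I would \emph{not} exchange the order of summation to free $U$ from $V$; instead I would keep the nested sum $\sum_{V}\sum_{U\subseteq V}$ and substitute the constant value $|\mC(U)|=q^{m(\dim U-d+1)}$ (or $1$) directly, which yields the displayed formula verbatim. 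The MRD hypothesis is used exactly once, through Lemma \ref{design}; everything else is the Möbius machinery of Lemma \ref{mobb} plus elementary subspace counting.
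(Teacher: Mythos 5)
Your proposal is correct and follows essentially the same route as the paper's proof: the same Möbius-inversion setup with $f(V)=|\{M\in\mC \mid \cs(M)=V\}|$ and $g(V)=|\mC(V)|$, the same use of Lemma \ref{design} (supplemented by the minimum-distance argument for $\dim U<d-1$), and the same nested count $\sum_{V}\sum_{U\subseteq V}$ producing the factor $\qbin{k}{i}\qbin{i}{u}$. Your final paragraph's decision to keep the sum nested rather than exchange the order is exactly what the paper does.
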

 
 \begin{proof}
 Since $0 \in \mC$, we have $W_0(\mC)=1$ and $W_i(\mC)=0$ for $1 \le i \le d-1$.
  For all subspaces $V \subseteq \F_q^k$ define 
$$f(V)= |\{ M \in \mC \ | \ \mbox{colsp}(M) = V \}|, \ \ \ \ \ \ 
g(V)= \sum_{U \subseteq V} f(U) = |\mC(V)|.$$
Fix $0 \le i \le k$ and a vector space 
$V \subseteq \F_q^k$ of dimension $i$. By Lemma \ref{mobb} we have 
$$f(V) = \sum_{u=0}^i {(-1)}^{i-u} q^{\binom{i-u}{2}}
\sum_{\substack{U \subseteq V \\ \textnormal{dim}(U)=u}} g(U).$$
Using Lemma \ref{design} and the fact that $\mC$ is MRD with $0 \in \mC$ we obtain
$$g(U)= \left\{ \begin{array}{cl}  1 & \mbox{ if 0 $\le \dim(U) \le d-1$,} \\
q^{m(u-d+1)} & \mbox{ if $d \le \dim(U) \le k$.}\end{array}\right.\ $$
Therefore
$$f(V) = \sum_{u=0}^{d-1} {(-1)}^{i-u} q^{\binom{i-u}{2}}
\qbin{i}{u} +  \sum_{u=d}^i {(-1)}^{i-u} q^{\binom{i-u}{2}+m(u-d+1)}
\qbin{i}{u}.$$
The result now follows from the identity 
\begin{equation*}
W_i(\mC) = \sum_{\substack{V \subseteq \F_q^k \\ \textnormal{dim}(V)=i}} f(V).\qedhere
\end{equation*} 
\end{proof}

Different formulas for the weight distribution of linear MRD codes were obtained  in
\cite{mcg} using elementary methods.

Theorem \ref{weight0} implies  the following 
\cite[Theorem 5.6]{del1}, which states 
that the distance distribution of
any MRD code is determined by its parameters.

\begin{corollary}\label{weight1}
Let $\mC \subseteq \ma$ be an MRD code with $|\mC|\ge 2$ and minimum distance $d$.
We have $D_0(\mC)=1$, $D_i(\mC)=0$ for $1 \le i \le d-1$, and 
\begin{equation*} \label{questo}
D_i(\mC)=\sum_{u=0}^{d-1} {(-1)}^{i-u} q^{\binom{i-u}{2}}
\qbin{k}{i}\qbin{i}{u} +   \sum_{u=d}^i {(-1)}^{i-u} q^{\binom{i-u}{2}+m(u-d+1)}
\qbin{k}{i}\qbin{i}{u}
\end{equation*}
for $d \le i \le k$. 
\end{corollary}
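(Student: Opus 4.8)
The plan is to reduce the statement about the distance distribution $D_i(\mC)$ to the statement about the weight distribution $W_i(\mC')$ of an associated MRD code $\mC'$ that contains $0$, and then quote Theorem \ref{weight0}. The key observation is that the distance distribution is invariant under translation: for a fixed $N \in \mC$, the set $\mC - N = \{M - N \mid M \in \mC\}$ has the same cardinality, the same minimum distance $d$, and — crucially — exactly the same distance distribution as $\mC$, because $d(M - N, M' - N) = \rk((M-N)-(M'-N)) = \rk(M-M') = d(M,M')$. Moreover $|\mC - N| = |\mC|$, so $\mC - N$ attains the Singleton-like bound of Theorem \ref{singbound} iff $\mC$ does; hence $\mC - N$ is again MRD, and by construction $0 = N - N \in \mC - N$.

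First I would make this reduction precise: fix any $N \in \mC$, set $\mC' = \mC - N$, and note $D_i(\mC) = D_i(\mC')$ for all $i$. Second, I would observe that for a code containing $0$ the distance distribution and the weight distribution coincide: indeed $D_i(\mC') = \tfrac{1}{|\mC'|}|\{(M,M') \in (\mC')^2 \mid \rk(M - M') = i\}|$, and taking $M' = 0$ shows $W_i(\mC') \le |\mC'| D_i(\mC')$, while... actually the cleanest route is simply: by Theorem \ref{weight0}, $W_i(\mC')$ has exactly the claimed closed form (with $\mC'$ in place of $\mC$), since $\mC'$ is MRD, contains $0$, and has minimum distance $d$. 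Third, I would invoke the fact that $D_i$ of an MRD code equals $W_i$ of any translate containing $0$ — but I need to justify $D_i(\mC') = W_i(\mC')$. This holds because $W_i(\mC') = |\{M \in \mC' \mid \rk(M) = i\}|$ counts pairs $(M, 0)$, and one shows more generally that the number of ordered pairs at distance $i$ is $|\mC'| \cdot W_i(\mC')$ when... this is in fact the subtle point.

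The main obstacle is exactly that last identity $D_i(\mC') = W_i(\mC')$: unlike the linear case treated earlier in the excerpt (where it follows from the subspace structure), here $\mC'$ is only assumed to contain $0$, not to be linear, so translation-counting does not immediately give that every difference is realized the same number of times. The way around this is to avoid claiming $D_i = W_i$ directly and instead argue as follows: Theorem \ref{weight0} gives $W_i(\mC')$ explicitly; separately, one checks that the right-hand side of the claimed formula for $D_i$ does not depend on which MRD code of parameters $(k,m,d)$ one uses (it is a fixed function of $k,m,d,i$). Then I would note that $\sum_i D_i(\mC) = |\mC|$ and that $D_i(\mC)$ is determined by the values $|\mC(U)|$ over subspaces $U$ via the same Möbius-inversion computation used in the proof of Theorem \ref{weight0} — that is, I would rerun the proof of Theorem \ref{weight0} with $f(V) = \tfrac{1}{|\mC|}|\{(M,M') \in \mC^2 \mid \cs(M - M') = V\}|$ and $g(V) = \sum_{U \subseteq V} f(U) = \tfrac{1}{|\mC|}|\{(M,M') \in \mC^2 \mid \cs(M-M') \subseteq U\}|$, using that $|\{(M,M') : \cs(M-M') \subseteq U\}| = |\mC| \cdot |\mC(U)|$ after translating each fiber — and this last equality does hold pairwise without linearity, because for fixed $M'$ the map $M \mapsto M - M'$ is a bijection onto $\mC - M'$, an MRD code containing $0$, whose count of elements with columnspace in $U$ is $|(\mC - M')(U)| = q^{m(u-d+1)}$ by Lemma \ref{design}, independent of $M'$. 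Summing over $M'$ gives $g(U) = q^{m(u-d+1)}$ for $\dim U \ge d-1$ and $g(U) = 1$ for $\dim U \le d - 1$, and Lemma \ref{mobb} together with $D_i(\mC) = \sum_{\dim V = i} f(V)$ then yields the formula verbatim as in Theorem \ref{weight0}. So the corollary really is just "the proof of Theorem \ref{weight0}, pairwise," and I would present it that way.
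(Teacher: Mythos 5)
Your final argument is correct, but you take a longer road than the paper, and the detour comes from a worry the paper never has to confront. You rightly notice that the identity $D_i(\mC')=W_i(\mC')$ is not automatic for a non-linear code $\mC'$ merely containing $0$, and you repair this by rerunning the whole M\"obius-inversion proof of Theorem \ref{weight0} at the level of ordered pairs, applying Lemma \ref{design} to each translate $\mC-M'$ to compute $g(U)$. That is valid. The paper sidesteps the issue with a single identity: by definition of the distance distribution, $|\mC|\cdot D_i(\mC)=|\{(M,N)\in\mC^2 \ | \ \rk(M-N)=i\}|=\sum_{N\in\mC}W_i(\mC-N)$. Each translate $\mC-N$ is MRD, contains $0$, and has minimum distance $d$, so by Theorem \ref{weight0} every summand equals the same closed-form expression; summing $|\mC|$ identical copies and dividing by $|\mC|$ gives the formula. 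Note that this never claims $D_i(\mC)=W_i(\mC-N)$ for any one translate; it only uses that all $|\mC|$ translates share the same weight distribution, which is precisely what Theorem \ref{weight0} delivers. Both proofs ultimately rest on the same two facts (translates of an MRD code are MRD with the same minimum distance, and the rank distance is translation-invariant); the paper's packaging lets Theorem \ref{weight0} be quoted as a black box $|\mC|$ times, whereas yours re-derives its inner mechanism in the pairwise setting.
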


\begin{proof}
Fix an $i$ with $d \le i \le k$. For $N \in \mC$ define $\mC-N=\{M-N \ | \ M \in \mC\}$. By definition
of distance distribution we have  
$$
|\mC| \cdot D_i(\mC)  = |\{(M,N) \in \mC^2 \ | \ \mbox{rk}(M-N)=i\}| = 
\sum_{N \in \mC} W_i(\mC-N).
$$
For all $N \in \mC$ the code $\mC-N$ is MRD. Moreover, 
$0 \in \mC-N$. The result now easily follows from Theorem~\ref{weight0}. 
\end{proof}

Corollary \ref{weight1} shows in particular that the weight distribution
of a linear MRD code is determined by $k$, $m$ and $d(\mC)$.
Recall from Proposition \ref{carMRD} that  an MRD code $\mC \subseteq \ma$ is characterized by
the property $d(\mC)+d(\mC^\perp)=k+2$. We now prove  
that 
the weight distribution of a linear code $\mC$ with 
$d(\mC)+d(\mC^\perp)=k+1$ is determined by 
$k$, $m$ and $\dim(\mC)$.
The following result is \cite[Corollary 28]{paperdistr}. 

\begin{corollary} \label{coroqmrd}
Let $\mC \subseteq \ma$ be a linear rank-metric code with $1 \le \dim(\mC) \le km-1$
and $d(\mC)+d(\mC^\perp)=k+1$. Then 
$$\dim(\mC) \not\equiv 0 \mod m \ \ \ \ \ \ \ \ \ \mbox{and} \ \ \ \ \ \  \ \ \ 
d(\mC)=k- \lceil \dim(\mC)/m\rceil +1.$$
Moreover, for all $d \le i \le k$ we have 
$$W_i(\mC) = \qbin{k}{i} \ \sum_{u=0}^{i-d(\mC)} {(-1)}^u q^{\binom{u}{2}} \qbin{i}{u}
\left( q^{\dim(\mC)-m(k+u-i)} -1\right).$$
\end{corollary}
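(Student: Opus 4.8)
The plan is to exploit the "quasi-MRD" condition $d(\mC)+d(\mC^\perp)=k+1$ to pin down the sizes $|\mC(U)|$ for all subspaces $U$, exactly as Lemma~\ref{design} does in the genuinely MRD case, and then feed this into the Möbius inversion of Lemma~\ref{mobb} to read off $W_i(\mC)$. First I would establish the two scalar facts. Write $d=d(\mC)$, $d^\perp=d(\mC^\perp)$, $e=\dim(\mC)$. The Singleton-like bound (Theorem~\ref{singbound}) applied to $\mC$ and $\mC^\perp$ gives $e\le m(k-d+1)$ and $km-e\le m(k-d^\perp+1)$; adding these and using $d+d^\perp=k+1$ forces $e\le m(k-d+1)$ and $km-e\le m(k-d+1)+m$, hence $m(k-d)<e\le m(k-d+1)$. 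This simultaneously shows $e\not\equiv 0\pmod m$ and $d=k-\lceil e/m\rceil+1$, which is the first displayed assertion.

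Next I would compute $|\mC(U)|$ as a function of $u=\dim(U)$. For $u\le d-1$ the only matrix of $\mC$ with columnspace inside $U$ is $0$ (since a nonzero codeword has rank $\ge d>u$), so $|\mC(U)|=1$. For $u\ge d$ the argument mirrors Lemma~\ref{design}: after multiplying $\mC$ by an invertible $G$ sending $U$ to the standard coordinate subspace $V$, consider the composite projection $\pi=\pi_2\circ\pi_1$, where $\pi_1$ drops all but the last $k-d+1$ rows and $\pi_2$ further drops down to the last $k-u$ rows, so that $\mC(U)\cong\pi^{-1}(0)$. Here $\pi_1$ is injective (minimum distance $d$) but no longer necessarily bijective; however $|\mC|=q^e$ with $m(k-d)<e\le m(k-d+1)$, so $\pi_1$ maps $\mC$ injectively into $\mat{(k-d+1)}{m}$ and $\pi=\pi_2\circ\pi_1$ lands in $\mat{(k-u)}{m}$. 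The key point is that $\pi$ is still \emph{surjective}: its image is a subgroup of $\F_q^{m(k-u)}$, and were it proper the fibre over $0$ would have size $\ge q^e/q^{m(k-u)-m}>q^{m(u-d+1)}$, forcing $\mC$ to contain more than one codeword of columnspace in $U^\perp$-complementary... — more cleanly, I would instead apply Proposition~\ref{tecn} directly: $|\mC(U)|=\frac{|\mC|}{q^{m(k-u)}}|\mC^\perp(U^\perp)|$, and since $\dim(U^\perp)=k-u\le k-d=d^\perp-1$, the first paragraph's argument applied to $\mC^\perp$ gives $|\mC^\perp(U^\perp)|=1$, whence $|\mC(U)|=q^{e-m(k-u)}$. (One must note $e-m(k-u)\ge 0$ here because $u\ge d$ gives $k-u\le k-d$, so $m(k-u)\le m(k-d)<e$.) Thus
$$|\mC(U)|=\begin{cases}1 & 0\le u\le d-1,\\ q^{e-m(k-u)} & d\le u\le k.\end{cases}$$
This makes Proposition~\ref{tecn} rather than the projection argument the cleanest route, and it is the step I expect to be the main obstacle — one has to be careful that the exponents stay nonnegative and that the two regimes glue correctly at $u=d-1$ versus $u=d$.

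Finally I would run Lemma~\ref{mobb}. Fix $V$ with $\dim(V)=i$, $d\le i\le k$, set $f(V)=|\{M\in\mC:\cs(M)=V\}|$ and $g(V)=|\mC(V)|$. Since all subspaces $U$ with $d\le\dim(U)\le i$ contribute $q^{e-m(k-u)}$ and those with $\dim(U)\le d-1$ contribute $1$, Lemma~\ref{mobb} yields
$$f(V)=\sum_{u=0}^{d-1}(-1)^{i-u}q^{\binom{i-u}{2}}\qbin{i}{u}+\sum_{u=d}^{i}(-1)^{i-u}q^{\binom{i-u}{2}+e-m(k-u)}\qbin{i}{u}.$$
Summing over the $\qbin{k}{i}$ subspaces $V$ of dimension $i$ gives $W_i(\mC)=\qbin{k}{i}f(V)$. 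To reach the stated closed form, reindex the first sum by noting $d\le i$ implies $i-d(\mC)\ge 0$ and substitute $u\mapsto i-u$ so that $\binom{i-u}{2}=\binom{u}{2}$ and $\qbin{i}{u}=\qbin{i}{i-u}$; the range $0\le u\le d-1$ becomes $i-d+1\le u\le i$ in the first sum and $0\le u\le i-d$ in the second, and since the "$+1$" and "$+q^{e-m(k+u-i)}$" pieces combine over the two ranges into a single sum $\sum_{u=0}^{i-d(\mC)}(-1)^u q^{\binom{u}{2}}\qbin{i}{u}(q^{\dim(\mC)-m(k+u-i)}-1)$ — the terms $i-d+1\le u\le i$ of the "$-1$" part cancel against the would-be continuation of the $q$-power part outside its range by the $q$-binomial theorem (exactly as in the proof of Lemma~\ref{mobb}) — one obtains precisely the asserted expression. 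This last bookkeeping is routine once the correct substitution is in place.
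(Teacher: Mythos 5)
Your derivation of the weight formula is correct and takes a genuinely different route from the paper's: the paper disposes of the ``Moreover'' part by invoking Theorem \ref{04-02-15}, which is stated there without proof and cited from the literature, whereas you give a self-contained argument by computing $|\mC(U)|$ for every subspace $U$ --- namely $|\mC(U)|=1$ when $\dim(U)\le d-1$ and, via Proposition \ref{tecn} together with $\dim(U^\perp)\le k-d=d^\perp-1$, $|\mC(U)|=q^{\dim(\mC)-m(k-\dim(U))}$ when $\dim(U)\ge d$ --- and then running the M\"obius inversion of Lemma \ref{mobb} exactly as in the proof of Theorem \ref{weight0}. The final reindexing does close up as you claim, because $\sum_{u=0}^{i}(-1)^{u}q^{\binom{u}{2}}\qbin{i}{u}=0$ for $i\ge 1$. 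What this buys is independence from the unproved Theorem \ref{04-02-15}, at the cost of redoing for quasi-MRD codes the counting that Lemma \ref{design} performs in the MRD case; your observation that Proposition \ref{tecn} replaces the projection argument is exactly the right simplification.

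The one genuine gap is in your first paragraph: the strict inequality $m(k-d)<\dim(\mC)$ does not follow from the two Singleton bounds as written. Substituting $d^\perp=k+1-d$ into $km-\dim(\mC)\le m(k-d^\perp+1)$ yields only $\dim(\mC)\ge m(k-d)$, non-strictly, and the intermediate assertion ``$km-\dim(\mC)\le m(k-d+1)+m$'' is not what that substitution produces. Both strict inequalities $m(k-d)<\dim(\mC)<m(k-d+1)$ are needed --- the left one for $\lceil\dim(\mC)/m\rceil=k-d+1$, and both for $\dim(\mC)\not\equiv 0\bmod m$ --- and each requires an extra input: if either bound were attained, then $\mC^\perp$ (respectively $\mC$) would be MRD, and Proposition \ref{carMRD} would force $d(\mC)+d(\mC^\perp)=k+2$, contradicting the hypothesis. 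This is precisely the role Proposition \ref{carMRD} plays in the paper's own argument for this part; with that one line added, your proof of the first assertion is complete as well.
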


\begin{proof}
Assume by contradiction that $\dim(\mC)=\alpha m$ for some $\alpha$. 
Applying Theorem~\ref{singbound} to $\mC$ and $\mC^\perp$ we obtain
\begin{equation} \label{twoin}
d(\mC) \le k-\alpha+1, \ \ \ \ \ \ \  d(\mC^\perp) \le \alpha+1.
\end{equation}
By Proposition \ref{carMRD}, the two inequalities in (\ref{twoin})
 are either both equalities,
or both strict inequalities. Since
$d(\mC)+d(\mC^\perp)=k+1$ by assumption, they must be both strict inequalities. Therefore 
$$d(\mC) \le k-\alpha, \ \ \ \ \ \ \  d(\mC^\perp) \le \alpha,$$
hence $d(\mC)+d(\mC^\perp) \le k$, a contradiction. This shows that 
$\dim(\mC) \not\equiv 0 \mod m$.

Now write $\dim(\mC)=\alpha m + \beta$ with $1 \le \beta \le m-1$.
Applying again Theorem \ref{singbound} to $\mC$ and 
$\mC^\perp$ one finds 
$$d(\mC) \le k - \left\lceil \frac{\alpha m + \beta}{m} \right\rceil +1= k-\alpha, \ \ \ \ \ \ \  
d(\mC^\perp) \le k- \left\lceil \frac{km-\alpha m -\beta}{m} \right\rceil = \alpha+1.$$
Since $d(\mC)+d(\mC^\perp)=k+1$, we must have 
$$d(\mC)= k- \left\lceil \frac{\alpha m + \beta}{m} \right\rceil +1= k- \left\lceil \frac{\dim(\mC)}{m} \right\rceil +1,$$
as claimed. The last part of the statement follows from Theorem \ref{04-02-15}.
\end{proof}

\section{Rank-metric anticodes} \label{sec:anti}

This section is devoted to rank-metric anticodes, i.e., 
rank-metric codes in which the distance between any two matrices is bounded
from above by a given integer $\delta$.

In Theorem \ref{boundanti} we give a bound for the cardinality of a (possibly non-linear) anticode,
using a  code-anticode-type bound. We also characterize 
optimal anticodes in terms of MRD codes.
Then we show that the dual of an optimal linear anticode is an optimal linear anticode.
The main results of this section appear in \cite{albrank} and \cite{tesialberto}.

 \begin{definition} \label{defanticode}
  Let $0 \le \delta \le k$ be an integer. A (\textbf{rank-metric}) \textbf{$\delta$-anticode}
  is a non-empty subset $\mA \subseteq \ma$ such that
  $d(M,N) \le \delta$ for all $M,N \in \mA$. We say that $\mA$ is 
  \textbf{linear} if it is an $\F_q$-linear subspace of $\ma$.
 \end{definition}
 
 \begin{example}\label{exanti}
 Any $\mA \subseteq \ma$ with $|\mA|=1$ is a $0$-anticode.
 The ambient space $\ma$ is a $k$-anticode.
 The vector space of $k \times m$ matrices over $\F_q$ whose last $k-\delta$ rows are zero 
 is a linear $\delta$-anticode of dimension ${m\delta}$.
 \end{example}
 
In the sequel we work with a fixed integer 
$0 \le \delta \le k$. Moreover, for $\mA,\mC \subseteq \ma$ we set 
 $\mA + \mC =\{M+N \ | \ M \in \mA, \ N \in \mC\}$.

 \begin{theorem} \label{boundanti}
Let $\mA \subseteq \ma$ be a $\delta$-anticode. Then 
 $|\mA| \le q^{m\delta}$. Moreover, if $\delta \le k-1$ then the following are equivalent.
 \begin{enumerate}
 \item  $|\mA| = q^{m\delta}$. \label{ch1}
  \item $\mA+\mC =\ma$ for some MRD code $\mC$ with $d(\mC)=\delta+1$. \label{ch2}
 \item $\mA+\mC =\ma$ for all MRD codes $\mC$ with $d(\mC)=\delta+1$. \label{ch3}
 \end{enumerate}
  \end{theorem}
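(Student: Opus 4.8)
The plan is to reduce the entire statement to a single injectivity property: for the given $\delta$-anticode $\mA$ and \emph{any} code $\mC \subseteq \ma$ with $d(\mC) > \delta$, the addition map $\mA \times \mC \to \ma$, $(M,N) \mapsto M+N$, is injective, so that $|\mA + \mC| = |\mA| \cdot |\mC|$. To establish this I would pick $M_1, M_2 \in \mA$ and $N_1, N_2 \in \mC$ with $M_1 + N_1 = M_2 + N_2$, rewrite this as $N_1 - N_2 = M_2 - M_1$, and observe that
$$d(N_1, N_2) = \rk(N_1 - N_2) = \rk(M_2 - M_1) = d(M_1, M_2) \le \delta < d(\mC).$$
The definition of minimum distance then forces $N_1 = N_2$, and hence $M_1 = M_2$. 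This argument uses only the defining inequality of an anticode together with the definition of minimum distance, so it is valid for possibly non-linear $\mA$ and $\mC$.

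Granting this, the bound $|\mA| \le q^{m\delta}$ is immediate. If $\delta = k$ the inequality reads $|\mA| \le q^{mk} = |\ma|$, which is trivial. If $\delta \le k-1$, Theorem \ref{attaingabid} furnishes a linear MRD code $\mC \subseteq \ma$ with $d(\mC) = \delta + 1$; such a code satisfies $|\mC| = q^{m(k - \delta)}$, and since $\mA + \mC \subseteq \ma$ the injectivity property gives $|\mA| \cdot q^{m(k-\delta)} \le q^{km}$, i.e. $|\mA| \le q^{m\delta}$.

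For the equivalences (with $\delta \le k-1$ fixed), I would argue as follows. The implication $(\ref{ch3}) \Rightarrow (\ref{ch2})$ is immediate, since Theorem \ref{attaingabid} guarantees that at least one MRD code with minimum distance $\delta + 1$ exists. For $(\ref{ch2}) \Rightarrow (\ref{ch1})$: if $\mA + \mC = \ma$ for some MRD code $\mC$ with $d(\mC) = \delta + 1$, then by injectivity $|\mA| \cdot q^{m(k-\delta)} = |\mA + \mC| = q^{km}$, forcing $|\mA| = q^{m\delta}$. For $(\ref{ch1}) \Rightarrow (\ref{ch3})$: if $|\mA| = q^{m\delta}$ and $\mC$ is an arbitrary MRD code with $d(\mC) = \delta + 1$, then $|\mA + \mC| = |\mA| \cdot |\mC| = q^{m\delta} \cdot q^{m(k-\delta)} = q^{km} = |\ma|$, and since $\mA + \mC \subseteq \ma$ this forces $\mA + \mC = \ma$.

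The only genuinely substantive step is the injectivity of the addition map; everything else is cardinality bookkeeping, and I do not anticipate any real obstacle. The two points requiring a little care are to phrase the injectivity argument without invoking linearity, so that it covers the possibly non-linear MRD codes allowed in items (\ref{ch2}) and (\ref{ch3}), and to treat the boundary case $\delta = k$ separately in the proof of the bound.
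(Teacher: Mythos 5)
Your proposal is correct and follows essentially the same route as the paper: your injectivity of the addition map $(M,N)\mapsto M+N$ is exactly the paper's observation that the cosets $[M]=M+\mC$ are pairwise disjoint, and the cardinality bookkeeping for the bound and the three equivalences is identical (you merely spell out the equivalences that the paper dismisses with ``a similar argument,'' and you handle the trivial boundary case $\delta=k$ explicitly).
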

  
  \begin{proof}
  Let
  $\mC \subseteq \ma$ be any MRD code with $d(\mC)=\delta+1$.
  Such a code exists by Theorem \ref{attaingabid}.
  For all $M \in \mA$ let $[M]=M+\mC= \{M+N \ | \ N \in \mC\}$.
  Then $[M] \cap [M'] = \emptyset$ for all 
  $M,M' \in \mA$ with $M \neq M'$. Moreover, by definition of 
  MRD code we have
  $|[M]|=|\mC|=q^{m(k-\delta)}$ for all $M \in \mA$, hence
  $$|\ma| \ge \left| \bigcup_{M \in \mA} [M]\right| = \sum_{M \in \mA} |[M]| = |\mA| \cdot |\mC|
  = |\mA| \cdot q^{m(k-\delta)}.$$
  Therefore $|\mA| \le q^{m\delta}$, and equality holds if and only if 
  $$\ma = \bigcup_{M \in \mA} [M] = \mA + \mC.$$
 A similar argument shows that properties \ref{ch1}, \ref{ch2} and \ref{ch3} are equivalent.
  \end{proof}

\begin{definition} \label{defopt}
We say that a $\delta$-anticode $\mA$ is ({cardinality})-\textbf{optimal} if it attains the bound of Theorem \ref{boundanti}.
\end{definition}

\begin{remark}
Example \ref{exanti} shows the existence of optimal linear $\delta$-anticodes for all choices of the parameter $\delta$. 
\end{remark}

In the remainder of the section we prove that the dual of an optimal linear $\delta$-anticode 
is an optimal $(k-\delta)$-anticode. The result may be regarded as the analogue of Theorem
\ref{dualofis} in the context of rank-metric anticodes.
We start with a preliminary result on the weight distribution of MRD codes.

\begin{lemma} \label{prel}
 Let $\mC \subseteq \ma$ be an MRD code
 with $0 \in \mC$, $|\mC| \ge 2$ and $d(\mC)=d$. Then
$W_{d+\ell}(\mC)>0$ for all $0 \le \ell \le k-d$.
\end{lemma}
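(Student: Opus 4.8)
The plan is to extract the positivity of $W_{d+\ell}(\mathcal{C})$ directly from the explicit formula in Theorem~\ref{weight0}, but since that formula is an alternating sum, I do not want to fight the signs. Instead, I would argue combinatorially. Recall $W_{d+\ell}(\mathcal{C}) = \sum_{\dim V = d+\ell} f(V)$, where $f(V) = |\{M \in \mathcal{C} : \cs(M) = V\}|$, so it suffices to exhibit, for one subspace $V \subseteq \F_q^k$ of dimension $d+\ell$, a matrix $M \in \mathcal{C}$ with $\cs(M) = V$ exactly. Fix such a $V$. By Lemma~\ref{design}, $|\mathcal{C}(V)| = q^{m(d+\ell-d+1)} = q^{m(\ell+1)}$, and for every proper subspace $W \subsetneq V$ with $\dim W = d+\ell-1$ we have $|\mathcal{C}(W)| = q^{m\ell}$.

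The key step is then an inclusion–exclusion (equivalently, the Möbius inversion of Lemma~\ref{mobb}, but run qualitatively): a matrix $M \in \mathcal{C}(V)$ fails to have $\cs(M) = V$ precisely when $\cs(M)$ is contained in one of the hyperplanes of $V$, i.e.\ $M \in \bigcup_{W} \mathcal{C}(W)$ where $W$ ranges over the $\qbin{d+\ell}{1}$ subspaces of $V$ of dimension $d+\ell-1$. So I would bound
$$
f(V) \;=\; |\mathcal{C}(V)| - \left| \bigcup_{\substack{W \subseteq V \\ \dim W = d+\ell-1}} \mathcal{C}(W) \right| \;\ge\; |\mathcal{C}(V)| - \sum_{\substack{W \subseteq V \\ \dim W = d+\ell-1}} |\mathcal{C}(W)| \;=\; q^{m(\ell+1)} - \qbin{d+\ell}{1} \, q^{m\ell}.
$$
Using $\qbin{d+\ell}{1} = \frac{q^{d+\ell}-1}{q-1} < q^{d+\ell} \le q^{k}$ and $m \ge k \ge d+\ell$ (indeed $m \ge k$ and $d+\ell \le k$), we get $\qbin{d+\ell}{1} \, q^{m\ell} < q^{k} q^{m\ell} \le q^{m} q^{m\ell} = q^{m(\ell+1)}$, so the union bound leaves a strictly positive surplus. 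Hence $f(V) > 0$, and therefore $W_{d+\ell}(\mathcal{C}) \ge f(V) > 0$.

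I expect the only delicate point to be checking that the crude union bound is good enough, i.e.\ that $\qbin{d+\ell}{1} \, q^{m\ell} < q^{m(\ell+1)}$; this is where the hypothesis $k \le m$ (and $d + \ell \le k$) is used, and it is genuinely needed since otherwise many hyperplane-contributions could in principle overwhelm $|\mathcal{C}(V)|$. One small subtlety: Lemma~\ref{design} requires $\dim W \ge d-1$, which holds here because $\dim W = d+\ell-1 \ge d-1$ as $\ell \ge 0$; and for $\ell = 0$ the claim is just $W_d(\mathcal{C}) > 0$, which holds since $\mathcal{C}$ has minimum distance $d$ and $0 \in \mathcal{C}$, so the argument degenerates gracefully. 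Everything else is routine.
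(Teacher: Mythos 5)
Your proof is correct, and it takes a genuinely different route from the paper's. The paper first invokes Theorem~\ref{weight0} to reduce to checking positivity on a single MRD code of the given parameters, and then works inside the Gabidulin-type evaluation code of Theorem~\ref{attaingabid}: for each $\ell$ it exhibits an explicit linearized polynomial $p_U=\prod_{\gamma\in U}(x-\gamma)$ vanishing exactly on a $(k-d-\ell)$-dimensional span of evaluation points, whose evaluation vector has rank exactly $d+\ell$. You instead work directly with an arbitrary (possibly non-linear) MRD code: Lemma~\ref{design} gives the exact counts $|\mC(V)|=q^{m(\ell+1)}$ and $|\mC(W)|=q^{m\ell}$ for the hyperplanes $W$ of a fixed $(d+\ell)$-dimensional $V$, and the union bound closes because $\qbin{d+\ell}{1}<q^{d+\ell}\le q^k\le q^m$. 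Your argument is shorter and more self-contained (it needs neither Theorem~\ref{weight0} nor the explicit Gabidulin construction), and it correctly uses the standing assumption $k\le m$; the paper's version buys explicit codewords of each rank and costs nothing there since Theorem~\ref{weight0} is already in place. One small caveat: your parenthetical justification for the case $\ell=0$ --- that $W_d(\mC)>0$ follows from $d(\mC)=d$ and $0\in\mC$ --- is not immediate for non-linear codes, since the pair realizing the minimum distance need not involve $0$; but this remark is not needed, as your union-bound argument already covers $\ell=0$ (there $|\mC(W)|=q^{m(d-1-d+1)}=1$ and $q^m-\qbin{d}{1}>0$).
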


\begin{proof}
 By
Theorem \ref{weight0}, we shall prove the lemma for a given  MRD
code $\mC
\subseteq \ma$ of our choice with $|\mC| \ge 2$, minimum distance
$d$, and $0 \in \mC$. We will first produce a convenient MRD code with the prescribed
properties.

Let $C \subseteq \F_{q^m}^k$ be the vector rank-metric code constructed in the
proof of Theorem~\ref{attaingabid},  with evaluation set
$E=\{\beta_1,...,\beta_k\}$ and evaluation map $\mbox{ev}_E$. Let $\Gamma$ be a
basis of
$\F_{q^m}$ over $\F_q$. By Proposition~\ref{gabtodel}, the set 
$\mC=\Gamma(C) \subseteq \ma$ is a linear 
code with $\dim (\mC)=m(k-d+1)$ and the same weight distribution as $C$. 
In particular, $\mC$ is a non-zero linear MRD code
of minimum distance $d$.

Now we prove the lemma for the MRD code $\mC$ constructed above. Fix $\ell$ with 
$0 \le \ell \le k-d$. Define $t=k-d-\ell$, and let $U
\subseteq \F_{q^m}$ be the $\F_q$-subspace generated by $\{
\beta_1,...,\beta_{t}\}$. If $t=0$ we set $U$ to be the
zero space.
By \cite{niede}, Theorem 3.52, 
$$p_U= \prod_{\gamma \in U} (x-\gamma)$$
is a linearized polynomial over $\F_{q^m}$ of degree $t=k-d-\ell \le k-d$,
i.e., 
 $p_U \in  \mbox{Lin}_q(n,k-d)$.  Therefore by Proposition \ref{gabtodel} it
suffices to prove that $\mbox{ev}_E(p_U)=(p_U(\beta_1),...,p_U(\beta_k))$
has rank
$d+\ell=k-t$. Clearly, $V(p_U)=U$. In particular we have
$\mbox{ev}_E(p_U)=(0,...,0,p_U(\beta_{t+1}),...,p_U(\beta_k))$. We will
show that 
$p_U(\beta_{t+1}),..., p_U(\beta_k)$ are linearly independent over $\F_q$.
Assume that there exist $a_{t+1},...,a_k \in \F_q$ with 
$\sum_{i=t+1}^k a_i p_U(\beta_i)=0$. Then we have $p_U \left(
\sum_{i=t+1}^k
a_i\beta_i \right)=0$, i.e.,  $\sum_{i=t+1}^k
a_i\beta_i \in V(p_U)=U$.
It follows that there exist $a_1,...,a_t \in \F_q$ such that
$\sum_{i=1}^t a_i \beta_i = \sum_{i=t+1}^k a_i\beta_i$, i.e.,
$\sum_{i=1}^t
a_i \beta_i - \sum_{i=t+1}^k a_i\beta_i=0$. Since $\beta_1,...,\beta_k$
are
 independent over $\F_q$, we have $a_i=0$ for all $i=1,...,k$. In
particular $a_i=0$ for $i=t+1,...,k$. Hence
$p_U(\beta_{t+1}),...,p_U(\beta_k)$ are linearly independent over $\F_q$, as claimed.
\end{proof}

The following proposition characterizes optimal linear anticodes in terms of their intersection with
 linear MRD codes.

\begin{proposition} \label{criterio}
 Assume $0 \le \delta \le k-1$, and let $\mA \subseteq
\ma$ be a linear code with $\dim (\mC) =m\delta$. The following are equivalent.
\begin{enumerate}
 \item $\mA$ is an optimal $\delta$-anticode.
 \item $\mA \cap \mC =\{0\}$ for all non-zero MRD linear codes $\mC \subseteq
\ma$ with $d(\mC)=\delta+1$.
\end{enumerate}
\end{proposition}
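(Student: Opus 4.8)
The plan is to derive both implications from the code--anticode bound of Theorem~\ref{boundanti} together with a single dimension count. First I would record the numerology: any MRD code $\mC$ with $d(\mC)=\delta+1$ has cardinality $q^{m(k-\delta)}$, so a \emph{linear} such $\mC$ has $\dim(\mC)=m(k-\delta)$; since $\dim(\mA)=m\delta$ this gives $\dim(\mA)+\dim(\mC)=mk=\dim(\ma)$. Hence $\dim(\mA+\mC)=mk-\dim(\mA\cap\mC)$, so for every linear MRD code $\mC$ with $d(\mC)=\delta+1$ the condition $\mA\cap\mC=\{0\}$ is equivalent to $\mA+\mC=\ma$. Note also that such a $\mC$ is automatically non-zero, since $\dim(\mC)=m(k-\delta)\ge m\ge 1$ because $\delta\le k-1$. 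This equivalence is the workhorse for both directions.

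For $(1)\Rightarrow(2)$: if $\mA$ is an optimal $\delta$-anticode then $|\mA|=q^{m\delta}$, so item~\ref{ch1} of Theorem~\ref{boundanti} holds, whence item~\ref{ch3} holds, i.e. $\mA+\mC=\ma$ for \emph{every} MRD code $\mC$ with $d(\mC)=\delta+1$. Applying the dimension count to every non-zero linear such $\mC$ gives $\mA\cap\mC=\{0\}$, which is exactly (2).

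For $(2)\Rightarrow(1)$ the crux --- and the step I expect to be the real obstacle --- is to show that (2) forces $\mA$ to be a $\delta$-anticode, i.e. that every $P\in\mA$ has $\rk(P)\le\delta$. I would argue by contradiction: suppose $P\in\mA$ with $r:=\rk(P)\ge\delta+1$ (in particular $P\neq 0$), and aim to produce a non-zero linear MRD code $\mC$ with $d(\mC)=\delta+1$ containing $P$. Start from the Gabidulin-type code $\mC_0\subseteq\ma$ of Theorem~\ref{attaingabid} with $d(\mC_0)=\delta+1$; it is non-zero and contains $0$. By Lemma~\ref{prel}, applied with $d=\delta+1$ and $\ell=r-\delta-1$ (which lies in $[0,k-d]$ since $\delta+1\le r\le k$), there is $Q\in\mC_0$ with $\rk(Q)=r$. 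Since any two $k\times m$ matrices of equal rank are equivalent, write $P=AQB$ with $A\in\mathrm{GL}_k(\F_q)$ and $B\in\mathrm{GL}_m(\F_q)$; the map $M\mapsto AMB$ is an $\F_q$-linear rank-preserving bijection of $\ma$, so $\mC:=\{AMB\mid M\in\mC_0\}$ has the same dimension and weight distribution as $\mC_0$ and is therefore a non-zero linear MRD code with $d(\mC)=\delta+1$ and $P\in\mC$; this contradicts (2). Hence $\mA$ is a $\delta$-anticode, and since $|\mA|=q^{\dim(\mA)}=q^{m\delta}$ it attains the bound of Theorem~\ref{boundanti}, i.e. $\mA$ is optimal. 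The degenerate case $\delta=0$ (where $\mA=\{0\}$ and both statements are trivially true) is already covered, since then there is no $P\in\mA$ of rank $\ge 1$.
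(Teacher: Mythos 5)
Your proof is correct and follows essentially the same route as the paper: the forward implication via Theorem~\ref{boundanti} and the dimension count $\dim(\mA)+\dim(\mC)=mk$, and the reverse implication by contradiction, using Lemma~\ref{prel} to find a codeword of the right rank in a Gabidulin-type MRD code and then transporting it onto the offending element of $\mA$ by rank-preserving equivalence $M\mapsto AMB$. The only difference is cosmetic: you spell out the $(1)\Rightarrow(2)$ direction explicitly, whereas the paper compresses it into ``it suffices to show\dots''.
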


\begin{proof}
By Theorem \ref{boundanti}, it suffices to show that 
if $\mA \cap \mC =\{0\}$ for all non-zero MRD linear codes $\mC \subseteq
\ma$ with $d(\mC)=\delta+1$, then $\mA$ is a $\delta$-anticode.

By
contradiction, assume that $\mA$ is not a $\delta$-anticode. 
Since $\mA$ is linear, by definition of $\delta$-anticode there exists 
$N \in \mC$ with
$\rk(N) \ge \delta+1$. Let $\mD$ be a non-zero linear MRD code with
$d(\mD)=\delta+1$ (see Theorem \ref{attaingabid} for the existence of such a code). 
By Lemma \ref{prel} there exists $M \in \mD$ with
$\rk(M)=\rk(N)$. There exist invertible matrices $A$ and $B$ of size $k \times
k$ and $m\times m$, resp., such that $N=AMB$. Define $\mC=A\mD B= \{
APB \ | \  P \in \mD \}.$ Then $\mC \subseteq \ma$ is a
non-zero linear MRD
code with $d(\mC)=\delta+1$ and such that $N \in \mA \cap \mC$. Since
$\rk(N) \ge \delta+1 \ge 1$, $N$ is not the zero matrix. Therefore 
$\mA \cap \mC \neq \{0\}$, a contradiction.
\end{proof}

We conclude the section showing that the dual of an optimal linear anticode is an optimal
linear anticode.

\begin{theorem} \label{dualanticode}
Let  $\mA \subseteq \ma$
be an optimal linear $\delta$-anticode. Then 
$\mA^\perp$ is an optimal linear $(k-\delta)$-anticode.
\end{theorem}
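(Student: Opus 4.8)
The plan is to use the characterization of optimal anticodes from Proposition \ref{criterio}. We know $\dim(\mA^\perp) = km - m\delta = m(k-\delta)$, so $\mA^\perp$ has the correct dimension to potentially be an optimal $(k-\delta)$-anticode. By Proposition \ref{criterio} (applied with $\delta$ replaced by $k-\delta$), it then suffices to show that $\mA^\perp \cap \mathcal{D} = \{0\}$ for every non-zero linear MRD code $\mathcal{D} \subseteq \ma$ with $d(\mathcal{D}) = k-\delta+1$. The main idea is to pass to duals: if $\mathcal{D}$ is MRD with $d(\mathcal{D}) = k-\delta+1$, then by Theorem \ref{dualofis} (and Proposition \ref{carMRD}), $\mathcal{D}^\perp$ is MRD with $d(\mathcal{D}^\perp) = k+2 - (k-\delta+1) = \delta+1$, and $\dim(\mathcal{D}^\perp) = km - m(\delta+1 \text{-type dimension})$, which one checks equals $m\delta$ — in fact $\dim(\mathcal{D}) = m(\delta)$, wait: $\dim(\mathcal{D}) = m(k - d(\mathcal{D}) + 1) = m(k - (k-\delta+1) + 1) = m\delta$, so $\dim(\mathcal{D}^\perp) = m(k-\delta)$. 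Hmm, that is not $m\delta$ in general, so I need to be a little careful: the relevant MRD codes of distance $\delta+1$ have dimension $m(k-\delta)$, not $m\delta$, but Proposition \ref{criterio} only requires them to be non-zero MRD with the right distance, so this is fine.

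The key step is the following duality identity for intersections: for linear subspaces $\mU, \mV \subseteq \ma$ with respect to the non-degenerate trace form, $(\mU \cap \mV)^\perp = \mU^\perp + \mV^\perp$, and consequently $\mU \cap \mV = \{0\}$ if and only if $\mU^\perp + \mV^\perp = \ma$. Applying this with $\mU = \mA^\perp$ and $\mV = \mathcal{D}$, we get $\mA^\perp \cap \mathcal{D} = \{0\}$ if and only if $\mA + \mathcal{D}^\perp = \ma$. Now $\mathcal{D}^\perp$ is a non-zero MRD code (it is non-zero since $\mathcal{D} \neq \ma$, as $d(\mathcal{D}) = k-\delta+1 \geq 2$ forces $\mathcal{D} \neq \ma$ provided $\delta \leq k-1$) with $d(\mathcal{D}^\perp) = \delta+1$. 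Since $\mA$ is an optimal $\delta$-anticode, Theorem \ref{boundanti} (the equivalence of conditions \ref{ch1} and \ref{ch3}) gives precisely $\mA + \mathcal{C} = \ma$ for \emph{all} MRD codes $\mathcal{C}$ with $d(\mathcal{C}) = \delta+1$ — in particular for $\mathcal{C} = \mathcal{D}^\perp$. Hence $\mA + \mathcal{D}^\perp = \ma$, so $\mA^\perp \cap \mathcal{D} = \{0\}$, and by Proposition \ref{criterio} we conclude that $\mA^\perp$ is an optimal $(k-\delta)$-anticode.

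There are two boundary cases to dispatch separately. If $\delta = 0$, then $\mA = \{0\}$ (it is a linear $0$-anticode), $\mA^\perp = \ma$, which is a $k$-anticode of cardinality $q^{km} = q^{m \cdot k}$, hence optimal. If $\delta = k$... but $\delta \le k-1$ is not assumed here, so I should check: Definition \ref{defanticode} allows $\delta = k$, and the only optimal $k$-anticode is $\ma$ itself (cardinality $q^{mk}$), whose dual is $\{0\}$, an optimal $0$-anticode. So the extremal cases $\delta \in \{0, k\}$ are handled by inspection, and for $1 \le \delta \le k-1$ the argument above applies (note $1 \le k - \delta \le k-1$ as well, so Proposition \ref{criterio} is applicable on the dual side too). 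The main obstacle is simply making sure the bookkeeping of dimensions and distances under dualization is internally consistent — in particular that $\mathcal{D}^\perp$ really does land in the family of MRD codes to which the optimality hypothesis on $\mA$ applies — but once the identity $(\mU \cap \mV)^\perp = \mU^\perp + \mV^\perp$ is in hand (which follows from non-degeneracy of the trace form exactly as in the proof of Proposition \ref{tecn}), everything falls into place.
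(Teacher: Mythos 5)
Your proof is correct and follows essentially the same route as the paper's: reduce via Proposition \ref{criterio} to showing trivial intersection of $\mA^\perp$ with every non-zero linear MRD code $\mC$ of minimum distance $k-\delta+1$, dualize $\mC$ using Theorem \ref{dualofis} and Proposition \ref{carMRD} to obtain an MRD code of distance $\delta+1$, use the optimality of $\mA$ to get $\mA + \mC^\perp = \ma$, and dualize that decomposition. The only (immaterial) difference is that you invoke the equivalence of conditions \ref{ch1} and \ref{ch3} in Theorem \ref{boundanti} where the paper instead combines Proposition \ref{criterio} with a dimension count to get $\mA \oplus \mC^\perp = \ma$; your explicit treatment of the boundary cases $\delta \in \{0,k\}$ is slightly more careful than the paper's.
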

\begin{proof}
 Let $\mA \subseteq \ma$ be an optimal linear $\delta$-anticode.
 If $\delta=k$ then the result is trivial.  From now on
we assume $0
\le \delta \le k-1$. By Definition \ref{defopt} we have $\dim (\mA) = m\delta$,
hence $\dim (\mA^\perp)=m(k-\delta)$. Therefore by Proposition~\ref{criterio} 
it suffices to show that 
 $\mA^\perp \cap \mC = \{0\}$ for all non-zero linear MRD codes
$\mC \subseteq \ma$ with $d(\mC)=k-\delta+1$. Let
$\mC$ be such a code. Then 
$$\dim (\mC)=m(k-(k-\delta+1)+1)=m\delta<mk.$$ Combining Theorem \ref{dualofis} and 
Proposition~\ref{carMRD} one shows that
$\mC^\perp$ is a linear MRD code with $d(\mC^\perp)=k-(k-\delta+1)+2=\delta+1$.
By Proposition \ref{criterio} we have $\mA \cap \mC^\perp = \{0\}$. Since
$\dim (\mA)+\dim (\mC^\perp)=m\delta+m(k-(\delta+1)+1)=mk$, we have $\mA
\oplus \mC^\perp = \ma$. Therefore $\{0\} = (\ma)^\perp =(\mA \oplus \mC^\perp)^\perp = \mA^\perp \cap \mC$. This shows the theorem.
\end{proof}

\vspace{2em}

\end{document}